\documentclass[journal,twocolumn]{IEEEtran}

\usepackage[cmex10]{amsmath}
\usepackage{amssymb}
\usepackage{cite}
\usepackage{url}
\usepackage{graphicx}
\usepackage{array,color}
\usepackage{multirow}
\usepackage{amsmath}
\usepackage{stfloats}
\usepackage{graphicx}
\usepackage{subfigure}
\usepackage{tabularx}
\usepackage{epsfig,color,balance,cite}
\usepackage{setspace}
\usepackage{bm}
\newcommand{\mathcalbf}[1]{\boldsymbol{\mathcal{#1}}}
\usepackage{textcomp}
\usepackage{algorithmic}
\usepackage{algorithm}
\usepackage{comment}
\usepackage{caption}
\usepackage{graphicx}
\usepackage{setspace}
\usepackage{diagbox}
\usepackage{float}
\usepackage{epstopdf}
\usepackage{caption}
\usepackage{subcaption}
\usepackage{booktabs}
\usepackage{array}
\usepackage{makecell}
\newcolumntype{Y}{>{\centering\arraybackslash}X}
\UseRawInputEncoding
\usepackage[cmex10]{amsmath}
\usepackage{amssymb}
\usepackage{cite}
\usepackage{graphicx}
\usepackage{array,color}
\usepackage{multirow}
\usepackage{amsmath}
\usepackage{stfloats}
\usepackage{graphicx}
\usepackage{subfigure}
\usepackage{tabularx}
\usepackage{epsfig,epsf,color,balance,cite}
\usepackage{setspace}
\usepackage{bm}
\usepackage{textcomp}
\usepackage{algorithmic}
\usepackage{algorithm}
\usepackage{caption} 
\usepackage{comment}
\usepackage{subfigure}

\usepackage{caption}
\usepackage{graphicx}
\usepackage{setspace}
\usepackage{diagbox}

\usepackage{multirow}
\usepackage{graphicx}
\allowdisplaybreaks[4]
\makeatletter

\newcommand{\Rmnum}[1]{\expandafter\@slowromancap\romannumeral #1@}
\makeatother
\usepackage{booktabs}
\usepackage{amsmath}
\newtheorem{Theorem}{Theorem }

\begin{document}
	
	\title{Tensor Train Decomposition-based Channel Estimation for MIMO-AFDM Systems with Fractional Delay and Doppler}
	
	\author{Ruizhe Wang, Cunhua Pan, Hong Ren, Haisu Wu, and Jiangzhou Wang, \IEEEmembership{Fellow, IEEE}
		\thanks{%R. Wang, C. Pan, H. Ren,  H. Wu and J. Wang 
		The authors are with National Mobile Communications Research Laboratory, Southeast University, Nanjing, China. (e-mail: {rzw, cpan, hren,  wuhaisu, j.z.wang}@seu.edu.cn).

		\emph{Corresponding author: Cunhua Pan and Hong Ren}}
	}

	\maketitle

	%=\thanks{This work was supported in part by the National Key Research and Development Project under Grant 2019YFE0123600, National Natural Science Foundation of China (62101128), Basic Research Project of Jiangsu Provincial Department of Science and Technology (BK20210205) and High Level Personal Project of Jiangsu Province (JSSCBS20210105)}}

\begin{abstract}
Affine Frequency Division Multiplexing (AFDM) has emerged as a promising chirp-based multicarrier technology for high-speed communication systems. 
By applying discrete affine Fourier transform (DAFT), AFDM has the potential to achieve significant improvements in spectral efficiency (SE) while maintaining a bit error rate (BER) performance comparable to that of OTFS. 
To fully exploit the diversity gain offered by AFDM, accurate channel estimation is essential.
However, existing studies have mainly focused on the integer-delay-tap scenario and single-symbol pilot-based estimation. 
Since delay taps in practice are generally fractional, approximating them as integers not only degrades delay estimation accuracy but also severely affects Doppler frequency estimation.
To address this problem, in this paper, we investigate channel estimation for multiple-input multiple-output (MIMO)-AFDM systems.
A time-affine frequency (T-AF) domain pilot structure is proposed to exploit time-domain phase variations.
By leveraging the rotational invariance property in the spatial and temporal domains, a channel estimation algorithm based on Vandermonde-structured tensor-train (TT) decomposition is developed. 
The proposed algorithm demonstrates superior computational efficiency compared with state-of-the-art parameter estimation methods.
Moreover, diverging from current studies, we derive the global Ziv-Zakai bound (ZZB) as an alternative parameter estimation error lower bound to the Cram\'{e}r-Rao bound (CRB). 
Numerical results show that the derived ZZB provides tighter global performance characterization and successfully captures the threshold phenomenon in mean square error (MSE) performance in the low-SNR regime.
Furthermore, the proposed algorithm achieves superior communication performance relative to the existing schemes, while offering a computational speedup, reducing the execution time by an order of magnitude compared to the state-of-the-art iterative algorithms.
\end{abstract}

\begin{IEEEkeywords}
	AFDM, tensor decomposition, channel estimation, embedded pilot, fractional delay and Doppler.
\end{IEEEkeywords}

\IEEEpeerreviewmaketitle
%
%\newpage
\section{Introduction}
Ubiquitous connectivity is envisioned as one of the fundamental usage scenarios of future sixth-generation (6G) communication systems, with the objective of providing global connectivity and high-quality communication service. 
{A critical challenge in achieving this vision lies in guaranteeing reliable communication in high-mobility scenarios, including  vehicle-to-everything (V2X), uncrewed aerial vehicles (UAVs), and high-speed trains. }
These applications generally involve extremely high relative velocities, which induce severe Doppler spreads. 
The existing multicarrier technologies, such as orthogonal frequency division multiplexing (OFDM), suffer from significant inter-carrier interference (ICI) in these scenarios, resulting in severe degradation of communication quality \cite{ofdmdopplerspread,Huilingzhu1,Huilingzhu2}.

To fundamentally overcome the performance degradation of OFDM in high-mobility scenarios, researchers have turned to novel modulation waveforms. 
One of the promising approaches is the delay-Doppler (DD) domain modulation, with orthogonal time frequency space (OTFS) being a representative technique \cite{otfs,otfs2}. 
OTFS is a two-dimensional (2D) modulation scheme that transforms the multipath doubly selective channel into sparse and approximately time-invariant impulse responses in the delay-Doppler (DD) domain, thereby improving communication reliability.
However, OTFS also suffers from notable drawbacks.
Firstly, OTFS requires block-wise modulation over multiple symbols, which introduces inherent latency into the system \cite{otfsdrawbacks}.
Secondly, due to its 2D modulation structure, OTFS symbols exhibit intrinsic 2D coupling, preventing symbol-by-symbol processing as in OFDM. 
Instead, the receiver must perform joint detection over 2D signal blocks, which significantly increases detection complexity \cite{OTFSSurvey}.

Chirp signal-based waveform design has also attracted considerable attention. 
Recently, a novel waveform termed affine frequency division multiplexing (AFDM) has been proposed for next-generation wireless communications to mitigate performance degradation in high-mobility scenarios \cite{AFDM}.
AFDM employs the inverse discrete affine Fourier transform (IDAFT) to map data symbols onto multiple orthogonal chirp signals. 
{By properly configuring the affine parameters according to the maximum Doppler frequency, AFDM converts the doubly selective channel into a quasi-static and sparse representation in the one-dimensional discrete affine-frequency (AF)-domain, similar in spirit to OTFS.}
It has been shown in \cite{AFDM} that AFDM can achieve optimal multipath diversity.
Results reported in 
%\cite{AFDM,AFDM1,AFDM2,AFDM3} 
\cite{AFDM,AFDM1} 
demonstrated that AFDM achieves bit error rate (BER) performance comparable to OTFS while significantly outperforming OFDM. 
Furthermore, different from the 2D modulation of OTFS, which needs to buffer multiple symbols for block processing. AFDM involves no inter-symbol coupling. 
Consequently, it is  compatible with the current 5G new radio (NR) frame structure \cite{afdmisac2}.

Nevertheless, the results reported in 
%\cite{AFDM,AFDM1,AFDM2,AFDM3} 
\cite{AFDM,AFDM1} 
rely on the ideal assumption of perfect channel state information (CSI).
In practical deployments, however, CSI is generally imperfect, which necessitates performance evaluation under practical channel estimation.
In \cite{afdmisac1}, a linear minimum mean square error (LMMSE) estimator was derived for single-input-single-output (SISO)-AFDM systems. 
In \cite{diagre}, an embedded pilot-aided diagonal reconstructability algorithm was proposed, which detects partial channel responses and reconstructs the entire AFDM channel by leveraging its diagonal reconstruction property. 
Nevertheless, the above methods did not explicitly estimate the underlying physical channel parameters, including the delay, Doppler and channel gain.
In \cite{AFDM4}, the authors proposed a MIMO-AFDM channel estimation algorithm based on the embedded pilot structure, where the delay and Doppler of each propagation path were assumed to lie on the discrete grid.
The authors of \cite{AFDM6} proposed a superimposed pilot structure and an orthogonal matching pursuit (OMP)-based channel estimation algorithm.
To address fractional Doppler estimation, \cite{AFDM5} developed a sparse Bayesian learning (SBL)-based off-grid channel estimation algorithm, where hyperparameters are updated via the expectation maximization (EM) algorithm, resulting in high computational complexity.
Furthermore, \cite{AFDM7} proposed a grid-evolution SBL algorithm for integer-delay and fractional-Doppler estimation using the first-order Taylor expansion. {However, the existing literature did not account for fractional delays.
It is noteworthy that, within the AFDM input-output framework, the discrete affine-frequency-domain peak of a propagation path is determined by a linear combination of its delay and Doppler frequency. With a sufficiently large number of subcarriers, neglecting the fractional component of delay introduces negligible delay estimation error yet leads to considerable Doppler frequency estimation deviation. While such deviation can be compensated via simple phase correction in single-path channels, it becomes destructive in multipath scenarios, as distinct Doppler frequencies across paths produce varying phase errors that cannot be jointly corrected with a single global factor \cite{phasecorrect}. Furthermore, the integer-delay approximation introduces a subcarrier-dependent phase mismatch, resulting in severe channel-reconstruction errors.}

{Motivated by the aforementioned challenges, we investigate pilot
design and channel estimation for MIMO-AFDM systems accounting for
fractional delay and Doppler. The main contributions of this paper are
summarized as follows:
\begin{itemize}
	\item 
	\textit{\textbf{Fractional delay-Doppler modeling and pilot
    design:}}
    We derive the input-output relationship of MIMO-AFDM systems
    under fractional delay and Doppler. Based on the resulting
    delay-Doppler coupling in the AF-domain, we design a
    time-extended embedded pilot structure that captures the temporal
    phase variation of each propagation path. This additional temporal
    information enables the Doppler frequency to be estimated
    independently of the coupled AF-domain location, thereby
    facilitating fractional-delay recovery.
	\item 
	\textit{\textbf{Low-complexity structured channel estimation:}}
    We model the received pilot as a 3D tensor and propose a sequential TT decomposition leveraging the Vandermonde structures of the spatial and temporal factors. The sequential reconstruction
    preserves the association among the recovered AoA, Doppler, and
    AF-domain factors while avoiding alternating iterative
    optimization. Furthermore, since the fractional-delay AF-domain
    factor is a partially observed Dirichlet response, we develop a
    Nyquist-interpolated linear interpolated discrete time Fourier transform (LIpDTFT) refinement to estimate its continuous
    location without imposing an integer-delay constraint. The
    proposed implementation achieves a lower computational burden than
    the considered iterative estimation methods.
	\item 
	\textit{\textbf{Global performance-bound analysis:}}
    We derive the Ziv-Zakai bound (ZZB) for the channel-parameter
    estimation problem in the considered MIMO-AFDM system. Unlike the
    local Cram\'{e}r-Rao bound (CRB), the ZZB provides a global
    characterization of the estimation error and captures the
    threshold behavior that occurs in the low-SNR regime.
	\item 
	\textit{\textbf{Performance evaluation:}}
    Numerical results validate the proposed parameter-estimation
    framework and demonstrate its communication performance in terms
    of normalized mean squared error (NMSE), bit error rate (BER), and
    spectral efficiency (SE). The results also demonstrate the
    necessity of fractional-delay modeling: even with oracle
    parameters and refitted path gains, the mismatched integer-delay
    model exhibits a pronounced BER floor. In addition, the proposed
    non-iterative implementation provides a substantial reduction in
    CPU runtime compared with the considered iterative benchmarks.
\end{itemize}}

The rest of this paper is organized as follows. Section \ref{SystemModel} describes the MIMO-AFDM system model. 
In Section \ref{PSDCE}, we propose a time-affine frequency (T-AF) domain embedded pilot structure and channel estimation algorithm. 
The ZZB analysis is provided in Section \ref{ZZBAnalysis}. 
Section \ref{SimulationResults} provides simulation results. 
Finally, conclusions are drawn in Section \ref{Conclusion}.
\vspace{-0.5cm}

\section{MIMO-AFDM System Model}\label{SystemModel}

\begin{figure*}[ht]
	\centering
	\begin{minipage}[t]{1.0\linewidth}
		\centering
		\includegraphics[width=1.0\linewidth]{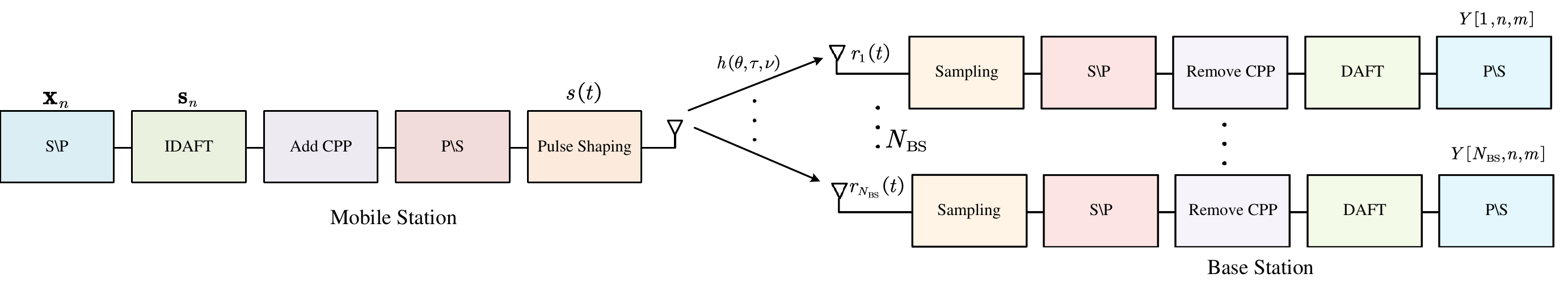}
		\caption{An example of MIMO-AFDM systems.}
		\label{systemmodel}
	\end{minipage}%
	\vspace{-0.5cm}
	\hfill
\end{figure*}

Consider a MIMO-AFDM system shown in Fig.~\ref{systemmodel}. 
The mobile station (MS) is equipped with a single antenna, while the base station (BS) employs a uniform linear array (ULA) consisting of $N_{\mathrm{BS}}$ antennas. The total number of subcarriers is $M$. The sampling interval is given by $T_s=\frac{1}{M\Delta f}$, where $\Delta f$ represents the subcarrier spacing. The duration of the chirp-periodic prefix (CPP) is $T_{\mathrm{CPP}}=M_{\mathrm{CPP}}T_s$. Let $T_{\mathrm{sym}}$ and $T$ denote the symbol durations with and without the CPP, respectively. Consequently, the symbol duration without CPP is $T=\frac{1}{\Delta f}$ and $T=\frac{M}{M+M_{\mathrm{CPP}}}T_{\mathrm{sym}}$. Let $x_{n,m}$ denote the symbol transmitted on the $m$-th subcarrier of the $n$-th AFDM symbol. Denote $\mathbf{x}_n=\left[ x_{n,0},x_{n,1},...,x_{n,M-1} \right] ^{\mathrm{T}}$ and the corresponding modulated signal as $\mathbf{s}_n=\left[ s_{n,0},s_{n,1},...,s_{n,M-1} \right] ^{\mathrm{T}}$. According to \cite{AFDM}, the modulated signal of the $n$-th AFDM symbol is given by
\begin{equation}
	\mathbf{s}_n=\mathbf{\Lambda }_{c1}^{\mathrm{H}}\mathbf{F}_{M}^{\mathrm{H}}\mathbf{\Lambda }_{c2}^{\mathrm{H}}\mathbf{x}_n,
\end{equation}
where $\mathbf{\Lambda }_c=\mathrm{Diag}\left( e^{-j2\pi cm^2},m=0,1,...,M-1 \right) $, and $\mathbf{F}_M$ is the $M\times M$ DFT matrix.
The analog time domain symbols with rectangular waveform shaping can be then written as
\begin{equation}\label{st}
s\left( t \right) =\sum_{n=1}^{N}\sum_{m=1}^{M}x_{n,m}\phi _m\left( t-nT_{\mathrm{sym}} \right) g_{\mathrm{tx}}\left( t-nT_{\mathrm{sym}} \right), 
\end{equation}
where the AFDM chirp basis function $\phi _m$ for the $m$-th subcarrier is given by
\begin{equation}\label{phi_m}
	\begin{aligned}
		\phi _m\left( t \right) &=e^{j\left( \frac{d}{2b}\left( t-T_{\mathrm{CPP}} \right) ^2+2\pi \left( m\Delta f \right) \left( t-T_{\mathrm{CPP}} \right) +\frac{a}{2b}\left( m\Delta f \right) ^2 \right)},
		\\ 
		&\qquad\qquad\qquad\qquad\qquad\qquad 0\le t\le T_{\mathrm{sym}},
	\end{aligned}
\end{equation}
where $\left( a,b,c,d \right) $ are the parameters of affine transform.
The rectangular waveform function\footnote{ Rectangular pulses are adopted for analytical tractability and consistency with the conventional discrete-time AFDM model. Their abrupt edges may cause high out-of-band emission (OOBE). For known shaping pulses accommodated by the CPP/guard interval, the spatial and temporal factors remain unchanged, while the AF-domain response becomes pulse-dependent. Thus, the proposed TT-based AoA/Doppler estimation remains applicable, whereas delay and gain estimation require pulse-dependent recalibration.} $g_{\mathrm{tx}}\left( t \right) $ is given by
\begin{equation}\label{rectangularwave}
	g_{\mathrm{tx}}\left( t \right) =\begin{cases}
	\frac{1}{\sqrt{T}}, 0\le t\le T_{\mathrm{sym}}\\
	0, \mathrm{otherwise}.\\
\end{cases}
\end{equation}
Note that the CPP has already been included in (\ref{phi_m}). In AFDM, we have $c_1=\frac{d}{4\pi b}T_{s}^{2}$ and $c_2=\frac{a}{4\pi b}\Delta f^2$. Generally, AFDM can achieve full diversity by setting $c_1=\frac{2\left( \alpha _{\max}+k_{\nu} \right) +1}{2M}$ and $c_2$ to a value much smaller than $\frac{1}{2M}$ \cite{AFDM}, where $\alpha _{\max}=\nu _{\max}T$ and $\nu _{\max}$ denotes the maximum Doppler frequency, and $k_{\nu}$ denotes the spacing factor. 
The received signal $\mathbf{r}\left( t \right) $ through the multipath time-varying channel at the BS is given by
\begin{equation}
	\begin{aligned}
		\mathbf{r}\left( t \right) &=\iiint{}\mathbf{a}_{N_{\mathrm{BS}}}\left( \theta \right) h\left( \theta ,\tau ,\nu \right) e^{j2\pi \nu \left( t-\tau \right)}
		\\
		&\quad\times s\left( t-\tau \right) \mathrm{d}\tau \mathrm{d}\nu \mathrm{d}\theta +\mathbf{n}\left( t \right),
	\end{aligned}
\end{equation}
where the space-delay-Doppler channel response $h\left( \theta ,\tau ,\nu \right) $ and the array steering vector $\mathbf{a}_{N_{\mathrm{BS}}}\left( \theta \right) $ are defined as
\begin{equation}
	h\left( \theta ,\tau ,\nu \right) =\sum_{i=1}^{P}{\alpha _i}{\bm \delta} \left( \theta -\theta _i \right) {\bm \delta} \left( \tau -\tau _i \right) {\bm \delta} \left( \nu -\nu _i \right),
\end{equation}
where $P$ denotes the number of propagation paths
and $
	\mathbf{a}_X\left( \theta \right) =\left[ 1, e^{j2\pi \frac{d}{\lambda}\cos \theta},..., e^{j2\pi \frac{d}{\lambda}\left( X-1 \right) \cos \theta} \right] ^{\mathrm{T}}$. $\mathbf{n}\left( t \right) $ denotes the additive white Gaussian noise vector and each element
	 follows an i.i.d. complex Gaussian distribution $\mathcal{C} \mathcal{N} \left( 0,\sigma _{\mathrm{n}}^{2} \right) $. The $\alpha_i$, $\theta_i $, $\tau_i $ and $\nu_i$  represent the complex channel gain, the arrival of angle (AoA), delay, and the Doppler frequency of the $i$-th propagation path, respectively. The delay and Doppler taps for the $i$-th path can be denoted as
\begin{equation}\label{frdelaydoppler}
	\tau _i=\frac{l_i+\iota _i}{M\Delta f}, \nu _i=\frac{k_i+\kappa _i}{T},
\end{equation}
where $l_i$ and $k_i$ are the integer indices of the delay tap and Doppler tap for the $i$-th propagation path, and $\iota_i \in (-\frac{1}{2}, \frac{1}{2}]$ and $\kappa_i \in (-\frac{1}{2}, \frac{1}{2}]$ are the fractional delay and Doppler for the $i$-th propagation path, respectively. After sampling, the received signals are demodulated via the discrete affine Fourier transform (DAFT) \cite{AFDM}. We have the following theorem to characterize the input-output relationship in the spatial-time-affine frequency (STAF) domain.

\begin{Theorem}\label{theorem1}
	The input-output relationship of the MIMO-AFDM system can be expressed as
	\begin{equation}\label{iorelationship}
		\begin{aligned}
Y\left[ n_{\mathrm{BS}},n,m \right] &=\sum_{m^{\prime}=0}^{M-1}{}x_{n,m^{\prime}}\sum_{i=1}^{P}{}\tilde{\alpha}_i\left[ \mathbf{a}_{N_{\mathrm{BS}}}\left( \theta \right) \right] _{n_{\mathrm{BS}}}
\\
&\quad\times e^{j2\pi \nu _inT_{\mathrm{sym}}}H_i\left[ m,m^{\prime} \right] +N\left[ n_{\mathrm{BS}},n,m \right],
		\end{aligned}
	\end{equation}
where $N\left[ n_{\mathrm{BS}},n,m \right] $ is the additive noise and
\begin{align}
	H_i\left[ m,m^{\prime} \right] &=e^{j\frac{2\pi}{M}\left( Mc_1\left( l_i+\iota _i \right) ^2-m^{\prime}\left( l_i+\iota _i \right) +Mc_2\left( {m^{\prime}}^2-m^2 \right) \right)}\nonumber
\\
&\quad \times\frac{e^{-j2\pi \left( m-m^{\prime}-\mathrm{loc}_i \right)}-1}{Me^{-j\frac{2\pi}{M}\left( m-m^{\prime}-\mathrm{loc}_i \right)}-M}\nonumber
\\
\tilde{\alpha}_i&=\alpha _ie^{j2\pi \nu _i\left( \frac{M_{\mathrm{CP}}}{M}T-\tau _i \right)},\nonumber\label{Hmm}
\\
\mathrm{loc}_i&=\nu _iT-2Mc_1\left( l_i+\iota _i \right).
\end{align}
respectively.

\end{Theorem}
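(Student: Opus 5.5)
The plan is a direct computation: substitute the single-path response into the received signal, sample, strip the CPP, apply the DAFT, and collapse the resulting geometric sum into the Dirichlet-type kernel that appears in $H_i[m,m']$.

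\emph{Single path, single symbol.} By linearity in $i$ and in $m'$, it suffices to treat one path $i$ and one data symbol $x_{n,m'}$. Substituting $h(\theta,\tau,\nu)$ into $\mathbf{r}(t)$, the integrals collapse, and antenna $n_{\mathrm{BS}}$ receives
\[
\alpha_i[\mathbf{a}_{N_{\mathrm{BS}}}(\theta_i)]_{n_{\mathrm{BS}}}e^{j2\pi\nu_i(t-\tau_i)}s(t-\tau_i).
\]
We sample the $n$-th symbol at $t=nT_{\mathrm{sym}}+T_{\mathrm{CPP}}+pT_s$ for $p=0,\dots,M-1$. Assuming $\tau_i<T_{\mathrm{CPP}}$, only the $n$-th term of (\ref{st}) contributes, so inter-symbol interference is absent.

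\emph{Splitting the Doppler phase.} Write $\nu_i t$ at the sample instant as $\nu_i nT_{\mathrm{sym}}+\nu_i T_{\mathrm{CPP}}+\nu_i pT_s$. This separates three pieces:
\begin{itemize}
\item the inter-symbol factor $e^{j2\pi\nu_i nT_{\mathrm{sym}}}$;
\item the constant $e^{j2\pi\nu_i(T_{\mathrm{CPP}}-\tau_i)}$, which equals $e^{j2\pi\nu_i(\frac{M_{\mathrm{CP}}}{M}T-\tau_i)}$ because $T_{\mathrm{CPP}}=M_{\mathrm{CPP}}T/M$, and is absorbed into $\tilde\alpha_i$;
\item the intra-symbol ramp $e^{j2\pi(k_i+\kappa_i)p/M}$.
\end{itemize}

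\emph{Chirp evaluated at a fractional delay.} Using $c_1=\frac{d}{4\pi b}T_s^2$ and $c_2=\frac{a}{4\pi b}\Delta f^2$, evaluate $\phi_{m'}(T_{\mathrm{CPP}}+pT_s-\tau_i)$ with $\tau_i=(l_i+\iota_i)T_s$. Expanding $(p-l_i-\iota_i)^2$ gives
\[
e^{j2\pi\left(c_1p^2-2c_1p(l_i+\iota_i)+c_1(l_i+\iota_i)^2+\frac{m'(p-l_i-\iota_i)}{M}+c_2m'^2\right)}.
\]
The CPP is what makes this formula valid for $p<l_i+\iota_i$ as well. For integer $2Mc_1$ and integer $M$ the chirp-periodic extension coincides with the analytic chirp, so I will take the extension as exactly the continuous chirp. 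The fractional delay is then harmless, since the continuous-time model is evaluated directly and no discrete circular shift by a non-integer amount is needed.

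\emph{DAFT and summation.} Apply the DAFT $\mathbf{\Lambda}_{c2}\mathbf{F}_M\mathbf{\Lambda}_{c1}$, whose $(m,p)$ entry is $\frac{1}{\sqrt M}e^{-j2\pi(c_2m^2+pm/M+c_1p^2)}$. The $c_1p^2$ terms cancel. Collecting $p$-dependent terms leaves
\[
\frac1M\sum_{p=0}^{M-1}e^{-j\frac{2\pi}{M}p\,(m-m'-\mathrm{loc}_i)},\qquad \mathrm{loc}_i=\nu_iT-2Mc_1(l_i+\iota_i).
\]
The remaining $p$-independent factor is $e^{j\frac{2\pi}{M}(Mc_1(l_i+\iota_i)^2-m'(l_i+\iota_i)+Mc_2(m'^2-m^2))}$. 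The geometric series sums to $\frac{e^{-j2\pi(m-m'-\mathrm{loc}_i)}-1}{M(e^{-j\frac{2\pi}{M}(m-m'-\mathrm{loc}_i)}-1)}$, which is exactly $H_i[m,m']$.

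\emph{Normalizations and noise.} The pulse factor $1/\sqrt T$ and the DFT factor $1/\sqrt M$ combine, together with the sampling normalization, into unit scaling; any residual constant is absorbed into $\tilde\alpha_i$. The DAFT is unitary, so the noise stays white and becomes $N[n_{\mathrm{BS}},n,m]$.

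The main obstacle is the CPP step for non-integer delay. One must argue that the prefix reproduces the chirp exactly at the shifted, off-grid sampling points, and at the same time keep careful track of sign conventions and of the $2Mc_1(l_i+\iota_i)$ cross term that creates the delay-Doppler coupling in $\mathrm{loc}_i$.
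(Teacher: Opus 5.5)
Your proposal is correct and follows essentially the paper's own route. The paper writes the output as a chirp-matched (ambiguity-function) integral over the prefix-free part of the $n$-th symbol, discretizes it into the same $\frac{1}{M}\sum_{p}$ over the $M$ post-prefix samples with the analytic chirp evaluated at the off-grid delay $(p-l_i-\iota_i)T_s$, and collapses the geometric series into $H_i[m,m']$, just as you do. Your concern about the prefix does not arise in this model, because the definition of $\phi_m$ already uses the analytic chirp on all of $[0,T_{\mathrm{sym}}]$, and your explicit split of the $e^{j2\pi\nu_i nT_{\mathrm{sym}}}$ and $e^{j2\pi\nu_i(T_{\mathrm{CPP}}-\tau_i)}$ factors is cleaner than the paper's intermediate expression.
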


\textit{Proof:} The proof is provided in Appendix \ref{proofoftheorem1}. \hfill $\blacksquare$

From the input-output relationship (\ref{iorelationship}), the following three conclusions can be drawn. 

\indent 1. The effective channel response on the $m^{\prime}$-th subcarrier of the $n$-th transmitted AFDM symbol is $H_{\mathrm{eff},m^{\prime}}\left[ n_{\mathrm{BS}},n,m \right] =\sum_{i=1}^P{}\tilde{\alpha}_i\left[ \mathbf{a}_{N_{\mathrm{BS}}}\left( \theta \right) \right] _{n_{\mathrm{BS}}}e^{j2\pi \nu _inT_{\mathrm{sym}}}H_i\left[ m,m^{\prime} \right] $, where a phase rotation $e^{j 2\pi \nu_i T_{\mathrm{sym}}}$ is induced by the Doppler frequency on the $i$-th propagation path. 
The channel estimation for the MIMO-AFDM system is the estimation of $\left[ \mathcalbf{H} ^{\mathrm{STAF}} \right] _{n_{\mathrm{BS}},n,m}=H_{\mathrm{eff},m^{\prime}}\left[ n_{\mathrm{BS}},n,m \right] $.

\indent 2. The AFDM effective channel exhibits diagonal reconstructability \cite{diagre}. Specifically, the complete AF-domain channel can be reconstructed once the channel parameters are estimated from the support set associated with a certain symbol on the $m^{\prime}$ subcarrier. 

\indent 3. When the integer delay constraint is relaxed, the number of unknown parameters to be estimated increases. Consequently, the maximum likelihood estimation algorithm for the delay and Doppler using a single AFDM symbol becomes imprecise \cite{AFDM,ISACAFDM1,ISACAFDM2}.

The above analysis suggests that single-symbol estimation is insufficient for resolving fractional delay and Doppler shifts accurately. 
The assumption of integer delay neglects the fractional part $\left\{ \iota _i \right\} $, which not only results in a coarse delay estimation but also leads to severe deviations in the Doppler frequency estimation $\left| \Delta \nu _i \right|=\left| 2Mc_1\iota _i/T \right|$. Typically, $c_1=\frac{2\left( \alpha _{\max}+k_{\nu} \right) +1}{2M}$, and this error is substantial and cannot be neglected.
Furthermore, from a positioning perspective, this assumption can result in decimeter-level positioning errors.
For example, in the context of 5G NR numerologies, the subcarrier spacing is generally set to $2^{\mu}\cdot 15$ kHz, where $\mu =0,1,...,6$. 
With 1024 subcarriers and a 30 kHz spacing, the system achieves a fine delay resolution of 32.55 ns, equivalent to a spatial resolution of 9.76 m. 
Thus, for both communication and positioning tasks, it is imperative to incorporate fractional delay and Doppler into the system model.

\section{Pilot Structure Design And Channel Estimation Algorithm}\label{PSDCE}
In this section, we first introduce the proposed MIMO-AFDM embedded pilot structure and formulate the channel estimation problem. 
Then, by exploiting the multidimensional structure and inherent low-rank properties of the MIMO-AFDM signals, we develop a computationally efficient channel estimation framework based on Vandermonde-structured tensor-train (TT) decomposition tailored to the proposed pilot structure.
\vspace{-0.5cm}

\subsection{Proposed T-AF Embedded Pilot Structure}
\begin{figure}[t]
	\centering
	\begin{minipage}[t]{1.0\linewidth}		\centering
		\includegraphics[width=1.0\linewidth]{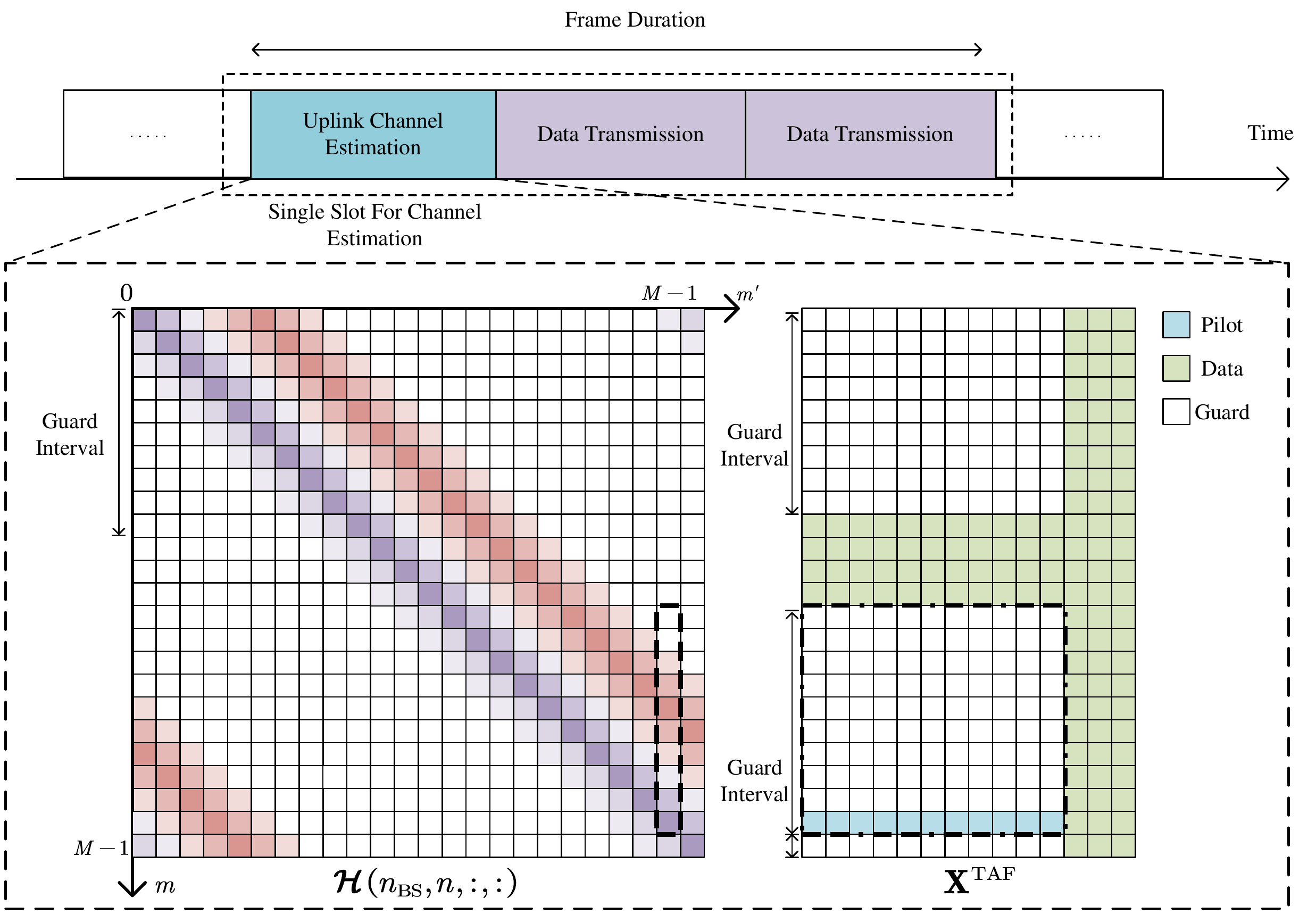}
		\caption{Proposed TAF-domain embedded pilot structure.}
		\label{mimoafdmpilot}
	\end{minipage}%
	\vspace{-0.5cm}
	\hfill
\end{figure}

Fig.~\ref{mimoafdmpilot} shows the proposed time-affine-frequency (TAF)-domain embedded pilot structure. 
In the first slot of each frame, pilot symbols are inserted into the first several AFDM symbols, and guard intervals are placed between the pilot and data symbols to prevent mutual interference. The remaining symbols are allocated for data transmission.
The left half of Fig.~\ref{mimoafdmpilot} presents an example of the equivalent AFDM channel with two propagation paths. 
It can be observed that, for the $m'$-th column of the equivalent affine-frequency domain channel $\mathcalbf{H}(n_{\mathrm{BS}},n,:,m')$, the amplitude pattern in the dashed region exhibits a cyclic shift relative to that of the adjacent columns. This phenomenon corresponds to the so-called diagonal reconstructability property reported in \cite{diagre}.

However, according to Theorem \ref{theorem1}, when both delay and Doppler are fractional, the equivalent intercept corresponding to the slanted line of the channel response for the $i$-th propagation path in $\mathcalbf{H} \left( n_{\mathrm{BS}},n,:, m^{\prime} \right) $ is given by (\ref{Hmm}).
As previously discussed, adhering to the integer delay assumption under these conditions results in coarse delay estimation and incurs severe errors in Doppler estimation.
Consequently, different from \cite{AFDM}, the pilot structure shown in Fig.~\ref{mimoafdmpilot} incorporates embedded pilots along the time dimension to extract the time domain phase-varying feature, which is marked by the dashed-dotted line. 
By exploiting such phase characteristics, the fractional delays and Doppler can be simultaneously estimated without loss of accuracy. 

Let $m_{\mathrm{pilot}}$ and $N$ denote the position of the embedded pilot symbol in each AFDM symbols and the number of AFDM symbols that are inserted pilot symbols, respectively. By collecting all the pilot signals received on the  $n_{\mathrm{BS}}$-th antenna, we have
\begin{equation}
	\widehat{\mathbf{Y}}_{n_{\mathrm{BS}}}=\sum_{i=0}^{P-1}{}\tilde{\alpha}_ie^{j2\pi \frac{d}{\lambda}\left( n_{\mathrm{BS}}-1 \right) \cos \theta _i}\mathbf{b}_i\mathbf{c}_{i}^{\mathrm{T}}+\mathbf{N}_{n_{\mathrm{BS}}},
\end{equation}
where 
\begin{align}
	\mathbf{b}_i\left( n \right) &=e^{j2\pi \nu _inT_{\mathrm{sym}}}
\\
\mathbf{c}_i\left( m \right) &=H_i\left[ m,m_{\mathrm{pilot}} \right] X\left[ m_{\mathrm{pilot}} \right].
\end{align}
By further stacking the signals in the space-TAF (STAF)-domain, we can obtain the following tensor model in the CPD format as
\begin{equation}\label{tensorY}
	\widehat{\mathcalbf{Y}} =\mathcalbf{S} \times _1\mathbf{A}_1\times _2\mathbf{A}_2\times _3\mathbf{A}_3+\mathcalbf{N},
\end{equation}
where the core tensor $\mathcalbf{S}$ and the factor matrices $\mathbf{A}_i, i=1,2,3$ are given by
\begin{align}\label{coreandfactors}
\mathcalbf{S} &= \mathcalbf{D} \left( \left[ \tilde{\alpha}_1,\tilde{\alpha}_2,...,\tilde{\alpha}_{P} \right] ^{\mathrm{T}} \right) \in \mathbb{C} ^{P\times P\times P}\nonumber
\\
\mathbf{A}_1&=\left[ \mathbf{a}_{N_{\mathrm{BS}}}\left( \theta _1 \right) ,\mathbf{a}_{N_{\mathrm{BS}}}\left( \theta _2 \right) ,...,\mathbf{a}_{N_{\mathrm{BS}}}\left( \theta _{P} \right) \right] \in \mathbb{C} ^{N_{\mathrm{BS}}\times P}\nonumber
\\
\mathbf{A}_2&=\left[ \mathbf{b}_1,\mathbf{b}_2,...,\mathbf{b}_{P} \right] \in \mathbb{C} ^{N\times P}\nonumber
\\
\mathbf{A}_3&=\left[ \mathbf{c}_1,\mathbf{c}_2,...,\mathbf{c}_{P} \right] \in \mathbb{C} ^{M_{\mathrm{region}}\times P},
\end{align}
where the entries of $\mathbf{b}_i$ and $\mathbf{c}_i$ are given by $\mathbf{b}_i\left( n \right) =e^{j2\pi \nu _inT_{\mathrm{sym}}}$ and $\mathbf{c}_i\left( m \right) =H_i\left[ m,m_{\mathrm{pilot}} \right] X\left[ m_{\mathrm{pilot}} \right] $, respectively.
From (\ref{coreandfactors}), it can be seen that the three dimensions of the tensor $\widehat{\mathcalbf{Y}}$ contain the angle information $\left\{ \theta _i \right\} $, the Doppler information $\left\{ \nu _i \right\} $, and the intercept $\left\{ \mathrm{loc}_i \right\} $, respectively.
Therefore, MIMO-AFDM channel estimation under fractional delay and Doppler can be transformed into the tensor decomposition and parameter estimation problem.

\subsection{Proposed Tensor Train based Channel Estimation Method}
In this subsection, we focus on the decomposition of $\widehat{\mathcalbf{Y}}$ and channel parameter estimation.

\subsubsection{Factor Matrices Retrieval}

From (\ref{tensorY}) and (\ref{coreandfactors}), it can be seen that the tensor decomposition problem in (\ref{tensorY}) can be solved by canonical polyadic decomposition (CPD). 
Representative methods for solving such a problem include a series of alternating iterative algorithms derived from the core ALS approach, such as the rectified ALS (RecALS) \cite{RecALS}, the SVD-employed ALS \cite{svd-als} and  tensorial minimum mean square error (T-MMSE) \cite{TMMSE}. However, the robustness and convergence of ALS face significant challenges when the tensor rank exceeds 2 \cite{alsconverge}, and the computational efficiency of ALS-based algorithms remains insufficient for high-mobility scenarios.
As a novel tensor decomposition method, TT decomposition \cite{TTD} decomposes a high-dimensional tensor into multiple three-dimensional tensors connected in a chain structure, which offers greater flexibility and higher computational efficiency compared to CPD. 
Moreover, the decomposition error of the TT decomposition can be strictly upper bounded by the sum of squared singular values discarded in the SVD of each dimension, effectively addressing the insufficient convergence and robustness of ALS. 
Without loss of generality, in the following derivations, we use $\mathcalbf{Y} =\mathcalbf{S} \times _1\mathbf{A}_1\times _2\mathbf{A}_2\times _3\mathbf{A}_3$ to denote the noiseless signal model and $\widehat{\mathcalbf{Y}}$ to denote the noisy signal model.
{For notational clarity, the following derivation assumes that the
number of resolvable paths $P$ is known at the
receiver.\footnote{ In practice, $P$ can be estimated before TT-SVD by
applying a standard model-order selection criterion, such as MDL
\cite{1164557}, to the eigenvalues of a sample covariance matrix
constructed from a mode unfolding of $\widehat{\mathcalbf{Y}}$. The
resulting estimate $\widehat P$ is then used to set the TT ranks.}}
The TT decomposition represents tensor $\mathcalbf{Y} $ as a chain-like product of three tensor cores as 
	\begin{align}
\mathcalbf{Y} &=\sum_{i_1=1}^P{}\sum_{i_2=1}^P{}\mathbf{G}_1\left( :,i_1 \right)  \mathcalbf{G} _2\left( i_1,:,i_2 \right) \mathbf{G}_3\left( i_2,: \right) \nonumber
\\
&=\mathbf{G}_1\times _{2}^{1}\mathcalbf{G} _2\times _{3}^{2}\mathbf{G}_3,
	\end{align}
where $\mathbf{G}_1\in \mathbb{C} ^{N_{\mathrm{BS}}\times P}$, $\mathcalbf{G} _2\in \mathbb{C} ^{P\times N\times P}$ and $\mathbf{G}_3\in \mathbb{C} ^{P\times M_{\mathrm{region}}}$ denote the head matrix, the core tensor and the tail matrix, respectively. In this paper, the TT-SVD is applied to reconstruct the TT-cores $\mathbf{G}_1$, $\mathcalbf{G} _2$ and $\mathbf{G}_3$ \cite{vtt}, which can be represented by 
\begin{align}
\mathbf{U}_1\mathbf{V}_1&=\mathbf{A}_1\left( \mathbf{A}_3\odot \mathbf{A}_2 \right) ^{\mathrm{T}}=\mathbf{U}_1\mathbf{M}_1\left( \mathbf{A}_3\odot \mathbf{A}_2 \right) ^{\mathrm{T}}\nonumber
\\
\mathbf{G}_1&=\mathbf{A}_1\mathbf{M}_{1}^{-1}\nonumber
\\
\mathbf{V}_1&=\mathbf{M}_1\left( \mathbf{A}_3\odot \mathbf{A}_2 \right) ^{\mathrm{T}}\label{A1}
\\
\mathbf{U}_2\mathbf{V}_2&=\mathrm{reshape}\left( \mathbf{V}_1;PN,M_{\mathrm{region}} \right) \nonumber
\\
&=\left( \mathbf{A}_2\odot \mathbf{M}_1 \right) \mathbf{A}_{3}^{\mathrm{T}}=\mathbf{U}_2\mathbf{M}_2\mathbf{A}_{3}^{\mathrm{T}}\nonumber
\\
\mathbf{U}_2&=\left( \mathbf{A}_2\odot \mathbf{M}_1 \right) \mathbf{M}_{2}^{-1}, \nonumber
\\
\mathcalbf{G} _2&=\mathcalbf{I} \times _1\mathbf{M}_1\times _2\mathbf{A}_2\times _3\mathbf{M}_{2}^{-\mathrm{T}}\label{A2}
\\
\mathbf{G}_3&=\mathbf{V}_2=\mathbf{M}_2\mathbf{A}_{3}^{\mathrm{T}},\label{A3}
\end{align}
where $\mathbf{M}_1$ and $\mathbf{M}_2$ are the non-singular change-of-bias matrices. The singular value matrix is assumed to be incorporated into the right singular matrix. It is observed from the derivation that the CPD factor matrices are intrinsically embedded within the subspaces of the TT-cores. Specifically, the column spaces of $\mathbf{A}_1$, $\mathbf{A}_2$ and $\mathbf{A}_3$ coincides with the column space of $\mathbf{G}_1$, the subspace spanned by the mode-2 fibers of $\mathcalbf{G} _2$ and the row space of $\mathbf{G}_3$, respectively. 
Given the Vandermonde nature of $\mathbf{A}_1$ and $\mathbf{A}_2$, their defining parameters can be retrieved from the subspaces of $\mathbf{G}_1$ and the subspaces of the mode-2 unfolding of $\mathbf{G}_2$ via ESPRIT. Nevertheless, this approach incurs a column  ambiguity. 
Since the parameters are estimated separately, there is no inherent mechanism to align the path indices, potentially resulting in mismatched pairs of channel parameters for the propagation paths. 

To resolve the ambiguity in parameter pairing, we leverage the intrinsic properties of the Vandermonde matrices to perform a sequential reconstruction of three factor matrices. According to (\ref{A1}), we have
\begin{equation}\label{G1estZ}
	\begin{aligned}
	\underline{\mathbf{G}}_1\mathbf{M}_1&=\underline{\mathbf{A}}_1
\\
\overline{\mathbf{G}}_1\mathbf{M}_1&=\overline{\mathbf{A}}_1=\underline{\mathbf{G}}_1\mathbf{M}_1\mathbf{Z}_1,
	\end{aligned}
\end{equation}
where $\underline{\mathbf{G}}_1$ and $\overline{\mathbf{G}}_1$ represent the matrix ${\mathbf{G}}_1$ with its last row removed and its first row removed, respectively. Then the generators $\left\{ z_{1,r} \right\} $ of Vandermonde matrix $\mathbf{A}_1$ can be estimated via the ESPRIT algorithm
\begin{equation}\label{EVDG1}
\underline{\mathbf{G}}_{1}^{\dagger}\overline{\mathbf{G}}_1=\widehat{\mathbf{M}}_1\mathbf{Z}_1\widehat{\mathbf{M}}_{1}^{-1},
\end{equation}
where $\widehat{z}_{1,r}=\mathbf{Z}_1\left( r,r \right) /\left| \mathbf{Z}_1\left( r,r \right) \right|$. Consequently, the Vandermonde factor matrix $\mathbf{A}_1$ can be reconstructed with permutation ambiguity as $\widehat{\mathbf{A}}_1=\mathbf{A}_1\mathbf{\Pi }$, where $\mathbf{\Pi }$ is a permutation matrix. Next, we use the estimated $\widehat{\mathbf{A}}_1$ to construct a pseudoinverse matrix and then recover $\mathbf{A}_2$. Multiplying $\mathbf{G}_1$ and the mode-2 unfolding of $\mathcalbf{G} _2$, we have 
\begin{equation}\label{G1G2}
	\mathbf{G}_1\left[ \mathcalbf{G} _2 \right] _1=\mathbf{A}_1\left( \mathbf{M}_{2}^{-\mathrm{T}}\odot \mathbf{A}_2 \right) ^{\mathrm{T}}.
\end{equation}
Left-multiplying the term by the pseudoinverse of $\widehat{\mathbf{A}}_1$ yields
\begin{align}\label{pinvAG1G2}
\widehat{\mathbf{A}}_{1}^{\dagger}\mathbf{G}_1\left[ \mathcalbf{G} _2 \right] _1&=\mathbf{\Pi }_{}^{-1}\left( \mathbf{M}_{2}^{-\mathrm{T}}\odot \mathbf{A}_2 \right) ^{\mathrm{T}}\nonumber
\\
&=\left( \mathbf{M}_{2}^{-\mathrm{T}}\mathbf{\Pi }\odot \mathbf{A}_2\mathbf{\Pi } \right) ^{\mathrm{T}}\nonumber
\\
&=\left( \widehat{\mathbf{M}}_{2}^{-\mathrm{T}}\odot \widehat{\mathbf{A}}_2 \right) ^{\mathrm{T}}.
\end{align}
Each row in $\left( \mathbf{M}_{2}^{-\mathrm{T}}\mathbf{\Pi }\odot \mathbf{A}_2\mathbf{\Pi } \right) ^{\mathrm{T}}$ is a vector Kronecker product $\left( \widehat{\mathbf{M}}_{2}^{-\mathrm{T}}\odot \widehat{\mathbf{A}}_2 \right) _{r,:}^{\mathrm{T}}=\left( \widehat{\mathbf{M}}_{2}^{-\mathrm{T}} \right) _{:,r}^{\mathrm{T}}\otimes \left( \widehat{\mathbf{A}}_2 \right) _{:,r}^{\mathrm{T}}$.
By reshaping each row of the Khatri-Rao product $\left( \widehat{\mathbf{M}}_{2}^{-\mathrm{T}} \right) _{:,r}^{\mathrm{T}}\otimes \left( \widehat{\mathbf{A}}_2 \right) _{:,r}^{\mathrm{T}}$ into matrix ${\mathbf{a}}_{2,r}\left( \mathbf{M}_{2}^{-\mathrm{T}} \right) _{:,r}^{\mathrm{T}}$, the factor matrix $\mathbf{A}_2$ and the nonsingular change-of-basis matrix $\mathbf{M}_{2}$ can be recovered by solving the following problem
\begin{align}
		 &\left\{ \widehat{\mathbf{a}}_{2,r},\left( \widehat{\mathbf{M}}_{2}^{-\mathrm{T}} \right) _{:,r}^{\mathrm{T}} \right\} \nonumber
		 \\
		 &\qquad=\mathop {\mathrm{arg}\min} \limits_{\widehat{\mathbf{a}}_{2,r},\left( \widehat{\mathbf{M}}_{2}^{-\mathrm{T}} \right) _{:,r}^{\mathrm{T}}}\left\| {\mathbf{a}}_{2,r}\left( \mathbf{M}_{2}^{-\mathrm{T}} \right) _{:,r}^{\mathrm{T}}-\widehat{\mathbf{a}}_{2,r}\left( \widehat{\mathbf{M}}_{2}^{-\mathrm{T}} \right) _{:,r}^{\mathrm{T}} \right\|\label{a2M2T}. 
\end{align}
The solution can be obtained by performing SVD of $\widehat{\mathbf{a}}_{2,r}\left( \mathbf{M}_{2}^{-\mathrm{T}} \right) _{:,r}^{\mathrm{T}}$. The estimated $\widehat{\mathbf{A}}_2$ and $\widehat{\mathbf{M}}_{2}^{-\mathrm{T}}$ exhibit permutation and scaling ambiguity as 
\begin{align}
	\widehat{\mathbf{A}}_2&=\mathbf{A}_2\mathbf{\Pi \Delta }_{}^{-1}, 
\\
\widehat{\mathbf{M}}_{2}^{-\mathrm{T}}&=\mathbf{M}_{2}^{-\mathrm{T}}\mathbf{\Pi \Delta },
\end{align}
where $\mathbf{\Delta}$ is a diagonal scaling ambiguity matrix. 
Finally, the factor matrix ${\mathbf{A}}_3$ with ambiguity can be obtained as 
\begin{equation}\label{G3}
\widehat{\mathbf{A}}_3=\mathbf{G}_{3}^{\mathrm{T}}\widehat{\mathbf{M}}_{2}^{-\mathrm{T}}=\mathbf{A}_3\mathbf{\Pi \Delta }.
\end{equation}

\subsubsection{Channel Parameters Estimation}
The channel parameters are estimated from the recovered factor matrices. The angles of arrival (AoAs) and Doppler frequencies can  be directly estimated by exploiting the rotational invariance property of the factor matrices.
\begin{align}
	\widehat{\theta}_r&=\mathrm{a}\cos \left( \frac{\lambda}{2\pi d}\angle \left( \mathbf{Z}_1 \right) _{r,r} \right),\label{estAOA}
	\\
	\widehat{\nu}_r&=\frac{1}{2\pi T_{\mathrm{sym}}}\angle \left( \widehat{\mathbf{A}}_{2}^{\dagger}\left( 1:N-1,r \right) \widehat{\mathbf{A}}_2\left( 2:N,r \right) \right),\label{estDoppler}
\end{align}
respectively.

{We then focus on delay estimation. According to (\ref{Hmm}), for
$m^{\prime}=m_{\mathrm{pilot}}$, the AF-domain response of the
$r$-th path exhibits a shifted Dirichlet structure. In the absence
of delay and Doppler, its peak is located at
$m=m^{\prime}=m_{\mathrm{pilot}}$. A nonzero delay and Doppler shift
the continuous peak location to
\begin{equation}
	\begin{aligned}
m_r&=m_{\mathrm{pilot}}+\nu _rT-2Mc_1\left( l_r+\iota _r \right) 
\\
&=\mathrm{round}\left( m_r \right) +\tilde{m}_r,
	\end{aligned}
\end{equation}
with $\tilde{m}_r\in \left[ -\frac{1}{2},\frac{1}{2} \right) $. 
Accordingly, the AF-domain
location parameter is given by
$\operatorname{loc}_r=m_r-m_{\mathrm{pilot}}$. Since $m_r$ generally
does not lie on an integer AF-domain bin, estimating
$\operatorname{loc}_r$ is equivalent to estimating the continuous
location of an off-grid spectral line from samples of its Dirichlet
response. Therefore, conventional peak-bin detection is limited by
the integer-bin resolution and cannot accurately recover the
fractional component $\widetilde{m}_r$.
The LIpDTFT estimator \cite{Belega2024LIpDTFT} provides
super-resolution frequency estimation by interpolating the spectral
samples around an initial peak estimate. However, its conventional
formulation assumes that the complete set of DFT samples is
available, whereas the embedded pilot structure considered here
provides only a partial AF-domain observation through
$\widehat{\mathbf A}_3$. Directly applying the conventional LIpDTFT
would implicitly treat the unobserved AF-domain entries as zeros and
distort the interpolated response. We therefore develop a modified
LIpDTFT estimator that performs Nyquist interpolation directly over
the observed entries of $\widehat{\mathbf A}_3$, enabling continuous
AF-domain location estimation without zero-filling the unobserved
samples.

Let $\mathcal Q=\{1,\ldots,M_{\mathrm{region}}\}$ denote the row-index set of $\widehat{\mathbf A}_3$. The global AF-domain index corresponding to its $q$-th row is $\bar m_q=M-M_{\mathrm{region}}+q-1$. Since the known factor $\exp(-j2\pi c_2\bar m_q^2)$ does not contain any unknown channel parameter, it is first removed as
\begin{align}
	\bar m_q&=M-M_{\mathrm{region}}+q-1,\\
	\check a_{3,r}(q)&=\left[\widehat{\mathbf A}_3\right]_{q,r}e^{j2\pi c_2\bar m_q^2},
	\quad q\in\mathcal Q .
	\label{eq:lipdtft_dechirp}
\end{align}
The integer peak and its stronger adjacent sample are determined by
\begin{align}
	q_r^\star
	&=\underset{q\in\mathcal Q}{\arg\max}\,
	\left|\left[\widehat{\mathbf A}_3\right]_{q,r}\right|,
	\nonumber\\
	q_r^{\mathrm a}
	&=\underset{q\in\{q_r^\star-1,q_r^\star+1\}\cap\mathcal Q}{\arg\max}\,
	\left|\left[\widehat{\mathbf A}_3\right]_{q,r}\right|,
	\nonumber\\
	\delta_{0,r}
	&=(q_r^{\mathrm a}-q_r^\star)
	\frac{\left|\left[\widehat{\mathbf A}_3\right]_{q_r^{\mathrm a},r}\right|}
	{\left|\left[\widehat{\mathbf A}_3\right]_{q_r^{\mathrm a},r}\right|+
	\left|\left[\widehat{\mathbf A}_3\right]_{q_r^\star,r}\right|}.
	\label{eq:lipdtft_initial}
\end{align}
Accordingly, the coarse continuous AF-domain peak estimate is
\begin{equation}
	\widehat m_{0,r}=\bar m_{q_r^\star}+\delta_{0,r}.
	\label{eq:lipdtft_initial_peak}
\end{equation}
Next, we evaluate the recovered AF-domain response at two continuous
points located symmetrically around $\widehat m_{0,r}$. Define the
Nyquist-interpolated continuous response and the normalized complex Dirichlet kernel as
\begin{equation}
    \mathcal S_r(\mu)
    =
    \sum_{q\in\mathcal Q}
    \check a_{3,r}(q)
    D_M(\mu-\bar m_q),
\end{equation}
and
\begin{align}  
    P_{-,r}
    &=
    \left|
    \mathcal S_r(\widehat m_{0,r}-d_x)
	\right|\nonumber
    \\
    &=
    \left|
    \sum_{q\in\mathcal Q}
    \check a_{3,r}(q)
    D_M\!\left(
        \widehat m_{0,r}-d_x-\bar m_q
    \right)
    \right|,
    \\
    P_{+,r}
    &=
    \left|
    \mathcal S_r(\widehat m_{0,r}+d_x)
    \right|\nonumber
    \\
    &=
    \left|
    \sum_{q\in\mathcal Q}
    \check a_{3,r}(q)
    D_M\!\left(
        \widehat m_{0,r}+d_x-\bar m_q
    \right)
    \right|,
    \label{eq:lipdtft_nyquist}
\end{align}
where $P_{-,r}$ and $P_{+,r}$ denote the magnitudes of the
interpolated AF-domain response evaluated at two symmetric
probing points, $\widehat m_{0,r}-d_x$ and
$\widehat m_{0,r}+d_x$, respectively. $D_M(x)= \frac{\sin(\pi x)}{M\sin(\pi x/M)}e^{-j\pi\frac{M-1}{M}x}$ is the Dirichlet kernel, $d_x\in(0,1)$ is the LIpDTFT interpolation offset, which is set to $d_x=0.1$ in this paper.
To obtain the continuous peak correction, define the residual error
of the initial estimate as
\begin{equation}
	\epsilon_r=\widehat m_{0,r}-m_r,
\end{equation}
Around the main lobe, the magnitude of the interpolated response can be
approximated by
\begin{equation}
\left|\mathcal S_r(\mu)\right|
    \simeq
    A_r\widetilde W_M(\mu-m_r),
\end{equation}
where $A_r>0$ absorbs the pilot amplitude, and TT
scaling ambiguity, and
\begin{equation}
	\widetilde W_M(x)=\frac{\sin(\pi x)}
         {M\sin(\pi x/M)}
\end{equation}
is the magnitude of the rectangular-window DTFT inside its main
lobe. Since $\widetilde W_M(x)$ is even, we have
\begin{align}
    P_{-,r}
    &\simeq
    A_r\widetilde W_M(d_x-\epsilon_r),\\
    P_{+,r}
    &\simeq
    A_r\widetilde W_M(d_x+\epsilon_r).
\end{align}
When $|\epsilon_r|\ll d_x$, first-order Taylor expansion around
$d_x$ gives
\begin{equation}
    \frac{P_{-,r}-P_{+,r}}
         {P_{-,r}+P_{+,r}}
    \simeq
    -
    \frac{\widetilde W_M'(d_x)}
         {\widetilde W_M(d_x)}
    \epsilon_r.
\end{equation}
Therefore,
\begin{equation}
    \widehat\epsilon_r
    \simeq
    -
    \frac{\widetilde W_M(d_x)}
         {\widetilde W_M'(d_x)}
    \frac{P_{-,r}-P_{+,r}}
         {P_{-,r}+P_{+,r}}.
\end{equation}
Since $m_r=\widehat m_{0,r}-\epsilon_r$, the LIpDTFT-refined
sub-bin displacement, continuous peak location, and AF-domain
location parameter are respectively given by
\begin{align}
    \widehat\delta_r
    &=
    \delta_{0,r}
    +
    \gamma_M(d_x)
    \frac{P_{-,r}-P_{+,r}}
         {P_{-,r}+P_{+,r}},
    \nonumber\\
    \widehat m_r
    &=
    \bar m_{q_r^\star}
    +
    \widehat\delta_r,
    \nonumber\\
    \widehat{\operatorname{loc}}_r
    &=
    \widehat m_r-m_{\mathrm{pilot}},
    \label{eq:lipdtft_location}
\end{align}
where $\gamma_M(d_x)=
    \frac{\widetilde W_M(d_x)}
         {\widetilde W_M'(d_x)}$. 
Then, according to the coupling relationship in (29), the fractional delay tap is recovered as
\begin{equation}
	\widehat{l_r+\iota_r}
	=\frac{\widehat\nu_rT-\widehat{\operatorname{loc}}_r}{2Mc_1}.
	\label{estdelay}
\end{equation}
}

\begin{algorithm}[t]
	\caption{Proposed Channel Estimation Algorithm}
	\label{alg1}
	\begin{algorithmic}[1]
		\REQUIRE $\widehat{\mathcalbf{Y}} $.
		\STATE Perform TT-SVD by (\ref{A1}), (\ref{A2}) and (\ref{A3}) to obtain the tensor cores $\mathbf{G}_1$, $\mathcalbf{G} _2$ and $\mathbf{G}_3$.
		\STATE Compute the EVD (\ref{EVDG1}) and extract the generators $\widehat{z}_{1,r}=\mathbf{Z}_1\left( r,r \right) /\left| \mathbf{Z}_1\left( r,r \right) \right|$. 
		\STATE Recover the factor matrix $\widehat{\mathbf{A}}_1$ as $\widehat{\mathbf{a}}_{1,r}=\left[ 1,\widehat{z}_{1,r},...,\widehat{z}_{1,r}^{N_{\mathrm{BS}}-1} \right] ^{\mathrm{T}}$. 
		\STATE Compute the multiplication (\ref{G1G2}) and (\ref{pinvAG1G2}), and recover $\widehat{\mathbf{A}}_2$ and $\widehat{\mathbf{M}}_{2}^{-\mathrm{T}}$ by (\ref{a2M2T}).
		\STATE Recover $\widehat{\mathbf{A}}_3$ by (\ref{G3}) .
		\STATE Estimate the AoAs $\left\{ \widehat{\theta}_r \right\} $ and Doppler frequencies $\left\{ \widehat{\nu}_r \right\} $ by (\ref{estAOA}) and (\ref{estDoppler}), respectively.
		\STATE {Estimate $\left\{\widehat{\operatorname{loc}}_r\right\}$ and $\left\{\widehat{l_r+\iota_r}\right\}$ by the Nyquist-interpolated LIpDTFT method in (\ref{eq:lipdtft_dechirp})--(\ref{estdelay}).}
		\STATE Compute the multiplications by (\ref{ASPI}) and estimate the channel gain $\left\{ \widehat{\tilde{\alpha}}_r \right\} $ by (\ref{estgain}).
		\ENSURE channel parameters $\left\{ \widehat{\theta}_r,\widehat{\nu}_r,\widehat{l}_r+\widehat{\iota}_r,\widehat{\tilde{\alpha}} \right\} $ for $r=1,2,...,P$ and reconstructed MIMO-AFDM channel $\widehat{\mathcalbf{H}}$.  
	\end{algorithmic}
\end{algorithm}
\vspace{-0.5cm}

Finally, we examine the estimation of channel gains. Note that $\widehat{\mathbf{A}}_3$ is subject to an ambiguity factor $\bm{{\bm \delta}}$, which should be eliminated first. Based on the previously estimated AoA and Doppler parameters, we firstly reconstruct the factor matrices $\mathbf{A}_1\mathbf{\Pi }$ and $\mathbf{A}_2\mathbf{\Pi }$ with column ambiguities. Then, defining $\mathcalbf{Z} =\mathbf{G}_1\times _{2}^{1}\mathcalbf{G} _2\times _{3}^{2}\mathbf{G}_3$, we have the $\mathbf{A}_3\mathbf{\Pi }$ without scaling ambiguity as follows:
\begin{equation}
	\begin{aligned}
		\left[ \mathcalbf{Z} \right] _3&=\mathbf{A}_3\left( \mathbf{A}_2\odot \mathbf{A}_1 \right) ^{\mathrm{T}}
\\
\mathbf{A}_3\mathbf{S\Pi }&=\left[ \mathcalbf{Z} \right] _3\left( \left( \mathbf{A}_2\mathbf{\Pi }\odot \mathbf{A}_1\mathbf{\Pi } \right) ^{\dagger} \right) ^{\mathrm{T}},\label{ASPI}
	\end{aligned}
\end{equation}
where $\mathbf{S}=\mathrm{Diag}\left( \left[ \tilde{\alpha}_1,\tilde{\alpha}_2,...,\tilde{\alpha}_{P} \right] ^{\mathrm{T}} \right) $. 
The channel gain can be estimated via (\ref{estgain}), which is at the bottom of the next page.
Compared to the method in \cite{tttsp}, which estimates parameters to recover all factor matrices and then obtains the channel gains by computing the pseudo-inverse of their Khatri-Rao product, the approach proposed in (\ref{estgain}) utilizes the strongest observed sample of the AF-domain response of the $r$-th propagation path to estimate the channel gains, offering greater robustness against noise. The proposed algorithm is detailed in Algorithm \ref{alg1}.

\subsection{Complexity Analysis}
In this subsection, the computational complexity of the proposed algorithm is analyzed. The main complexity of the proposed algorithm lies in SVD and pseudoinverse of matrices. Specifically, the TT-SVD process has a computational complexity of $\mathcal{O} \left( N_{\mathrm{BS}}^{2}NM_{\mathrm{region}}+P^2N^2M_{\mathrm{region}} \right) $. The EVD of (\ref{EVDG1}) and the pseudoinverse and multiplication (\ref{pinvAG1G2}) are on the order of the complexity of $\mathcal{O} \left( N_{\mathrm{BS}}^{3} \right) $ and $\mathcal{O} \left( N_{\mathrm{BS}}^{2}P+N_{\mathrm{BS}}P^2N \right) $, respectively.
For each propagation path, the proposed LIpDTFT refinement evaluates two continuous DTFT samples over $M_{\mathrm{region}}$ observed AF-domain entries. Therefore, its additional complexity is $\mathcal{O}(PM_{\mathrm{region}})$, which is negligible compared with the TT-SVD and matrix pseudoinverse operations and does not change the dominant-order complexity of the proposed algorithm.
Finally, the total complexity of the proposed algorithm is of order $\mathcal{O} \left( N_{\mathrm{BS}}^{2}NM_{\mathrm{region}}+P^2N^2M_{\mathrm{region}}+N_{\mathrm{BS}}^{2}P+N_{\mathrm{BS}}P^2N \right) $.

{
\subsection{Identifiability of the Recovered Channel Parameters}
\label{subsec:identifiability}

In this subsection, we analyze the identifiability of the recovered channel parameters.
Although the individual TT cores (\ref{A1})-(\ref{A3}) are not unique \cite{TTD}, the physical
channel parameters recovered by the proposed algorithm are
identifiable under the following conditions. We assume $P$ resolvable paths with nonzero gains and rank-$P$ TT
unfoldings.
First, the spatial generators are recovered by (\ref{EVDG1}). Since
$\mathbf{G}_1=\mathbf{A}_1\mathbf{M}_1^{-1}$ and $\mathbf{M}_1$ is
nonsingular, we have
\[
\operatorname{rank}
\left(
\underline{\mathbf{G}}_1
\right)
=
\operatorname{rank}
\left(
\underline{\mathbf{A}}_1
\right).
\]
Thus, the AoAs can be uniquely identified by (\ref{estAOA}) when
\setcounter{equation}{47}
\begin{equation}
\operatorname{rank}
\left(
\underline{\mathbf{A}}_1
\right)=P.
\label{eq:spatial_identifiability}
\end{equation}

Next, the temporal generators are recovered by (\ref{a2M2T}).
Since
$\widehat{\mathbf{A}}_2
=\mathbf{A}_2\mathbf{\Pi}\mathbf{\Delta}^{-1}$
and $\mathbf{\Delta}$ is nonsingular, the permutation and scaling
ambiguities do not affect the adjacent-entry ratio. Thus, the Doppler
frequencies can be uniquely recovered by (\ref{estDoppler}) within their unambiguous interval
when
\begin{equation}
N\geq2.
\label{eq:temporal_identifiability}
\end{equation}

Then, the AF-domain factor is recovered by (\ref{G3}). Since
$\widehat{\mathbf{A}}_3
=\mathbf{A}_3\mathbf{\Pi}\mathbf{\Delta}$ and
$\mathbf{\Pi}\mathbf{\Delta}$ is nonsingular, all $P$ AF-domain factors
are preserved when
\begin{equation}
\operatorname{rank}
\left(
\mathbf{A}_3
\right)=P.
\label{eq:AF_identifiability}
\end{equation}
The scaling in $\widehat{\mathbf{A}}_3$ does not affect the normalized
magnitude relations used by the LIpDTFT refinement. Therefore, provided
that the AF-domain location mapping is injective over the prescribed
delay--Doppler region and $c_1\neq0$, the continuous AF-domain locations
and fractional delays are uniquely recovered from (\ref{estdelay}).

Finally, the gain-absorbed AF-domain factor is recovered by
(\ref{ASPI}). The pseudoinverse in (\ref{ASPI}) uniquely determines
this factor when
\begin{equation}
\operatorname{rank}
\left(
\mathbf{A}_2\odot\mathbf{A}_1
\right)=P.
\label{eq:gain_identifiability}
\end{equation}
The path gains are then uniquely recovered from (\ref{estgain}).

In summary, the proposed algorithm uniquely recovers the channel
parameters under the following conditions:
\begin{equation}
\begin{aligned}
\operatorname{rank}
\left(
\underline{\mathbf{A}}_1
\right)&=P,
\\
N&\geq2,
\\
\operatorname{rank}
\left(
\mathbf{A}_3
\right)&=P,
\\
\operatorname{rank}
\left(
\mathbf{A}_2\odot\mathbf{A}_1
\right)&=P,
\end{aligned}
\label{eq:overall_identifiability}
\end{equation}
Under these
conditions, the physical channel parameters
\begin{equation}
\left\{
\theta_r,\nu_r,l_r+\iota_r,\widetilde{\alpha}_r
\right\}_{r=1}^{P}
\end{equation}
are uniquely identifiable up to a common path permutation, although
the individual TT cores are not unique.
}

\setcounter{equation}{46}
\begin{figure*}[hb]
	\hrulefill
	\begin{equation}
	\widehat{\tilde{\alpha}}_r=\frac{\left( \mathbf{A}_3\mathbf{S\Pi } \right) _{q_r^\star,r}}{e^{j\frac{2\pi}{M}\left( Mc_1\left( \widehat{l_r+\iota_r} \right) ^2-m_{\mathrm{pilot}}\widehat{l_r+\iota_r}+Mc_2\left( m_{\mathrm{pilot}}^{2}-\bar m_{q_r^\star}^{2} \right) \right)}\frac{e^{-j2\pi \left( \bar m_{q_r^\star}-m_{\mathrm{pilot}}-\widehat{\nu}_rT+2Mc_1\widehat{l_r+\iota_r} \right)}-1}{Me^{-j\frac{2\pi}{M}\left( \bar m_{q_r^\star}-m_{\mathrm{pilot}}-\widehat{\nu}_rT+2Mc_1\widehat{l_r+\iota_r} \right)}-M}}\label{estgain}
\end{equation}
\end{figure*}

\section{ZZB Analysis}\label{ZZBAnalysis}
The lower bound of the MSE is essential for evaluating channel parameter estimation performance. Most existing researches \cite{zhouzhoujsac,nomp} utilized the CRB as the performance benchmark. Nevertheless, at low signal-to-noise ratios (SNRs), the CRB fails to provide a tight bound on the actual estimation performance. As a local non-Bayesian bound, it exhibits asymptotic tightness only under small estimation error conditions (high SNR scenarios). In numerical simulations, the estimation performance generally suffers from severe degradation as the SNR drops in the low-SNR regime. This phenomenon is known as the threshold effect \cite{ZZB,Fundamentals,detI}. In this section, we investigate the closed-form ZZB for estimation of channel parameters in multipath propagation, including AoAs, Doppler frequencies, and delay taps. Without loss of generality, we assume that the signals received over different propagation paths are incoherent, and the channel parameters associated with each path are independently and identically distributed (i.i.d.) while being mutually distinct.

Define $\eta _i=l_i+\iota _i$ and $\bm{p }=\left[ \bm{\theta }^{\mathrm{T}},\bm{\nu }^{\mathrm{T}},\bm{\eta }^{\mathrm{T}} \right] ^{\mathrm{T}}$. The vectorized form of the tensor $\mathcalbf{Y} $ is denoted by $\mathbf{y}$, and is given by
\setcounter{equation}{53}
\begin{align}
	\widehat{\mathbf{y}}&=\mathrm{vec}\left( \widehat{\mathcalbf{Y}} \right) \nonumber
\\
&=\sum_{i=1}^{P}{}\tilde{\alpha}_i\bm{\phi }\left( \theta _i,\nu _i,\eta _i \right) +\mathbf{n}\nonumber
\\
&=\mathbf{\Phi }\left( \bm{\theta },\bm{\nu },\bm{\eta } \right) \bm{\alpha }+\mathbf{n},
\end{align}
where 
\begin{align}
	\bm{\phi }\left( \theta _i,\nu _i,\eta _i \right) &=\mathbf{a}_{N_{\mathrm{BS}}}\left( \theta _i \right) \otimes \mathbf{b}_N\left( \nu _i \right) \otimes \mathbf{c}_{M_{\mathrm{pilot}}}\left( \nu _i,\eta _i \right) \nonumber
\\
\bm{\Phi }\left( \bm{p} \right) &=\left[ \bm{\phi }\left( \theta _1,\nu _1,\eta _1 \right) ,\bm{\phi }\left( \theta _2,\nu _2,\eta _2 \right) ,...,\bm{\phi }\left( \theta _P,\nu _P,\eta _P \right) \right] ,\nonumber
\end{align}
and $\mathbf{\alpha }=\left[ \tilde{\alpha}_1,\tilde{\alpha}_2,...,\tilde{\alpha}_P \right] ^{\mathrm{T}}$ and $\mathbf{n}=\mathrm{vec}\left( \mathcalbf{N} \right) $.
Since DAFT is a unitary transform, each element of the noise tensor $\mathcalbf{N}$ still follows an i.i.d. complex Gaussian distribution $\mathcal{C} \mathcal{N} \left( 0,\sigma _{\mathrm{n}}^{2} \right) $. 
Given the channel parameters $\bm{p }$, the covariance matrix $\mathbf{R}_{\mathbf{y}|\bm{p }}$ is written as
\begin{align}
		\mathbf{R}_{\mathbf{y}|\bm{p }}&=\mathbb{E} \left\{ \mathbf{yy}^{\mathrm{H}} \right\} \nonumber
\\
&=\mathbf{\Phi }\left( \bm{p } \right) \mathbf{\Sigma \Phi }^{\mathrm{H}}\left( \bm{p } \right) +\sigma _{n}^{2}\mathbf{I}_{N_{\mathrm{BS}}NM_{\mathrm{region}}},
\end{align}
where $\mathbf{\Sigma }=\mathbb{E} \left\{ \bm{\alpha \alpha }^{\mathrm{H}} \right\} =\mathrm{diag}\left( \left[ \sigma _{1,\mathrm{PL}}^{2},\sigma _{2,\mathrm{PL}}^{2},...,\sigma _{P,\mathrm{PL}}^{2} \right] ^{\mathrm{T}} \right) $. The error correlation matrix is defined as
\begin{equation}
	\mathbf{R}_{\mathbf{\epsilon }}=\mathbb{E} \left\{ \mathbf{\epsilon \epsilon }^{\mathrm{T}} \right\} =\mathbb{E} \left\{ \left( \hat{\bm{p }}-\bm{p } \right) \left( \hat{\bm{p }}-\bm{p } \right) ^{\mathrm{T}} \right\}.
\end{equation}
Based on the above definitions, the extended ZZB is given by \cite{exzzb,zzb2d}
\begin{align}
		\bm{w}^{\mathrm{T}}\mathbf{R}_{\bm{\epsilon }}\bm{w}&\ge \frac{1}{2}\int_0^{\infty}{}\max_{\bm{{\bm \delta} }:\bm{w}^{\mathrm{T}}\bm{{\bm \delta} }=h} \left[ \int_{\mathbb{R} ^{3P}}^{}{}\left( f_{\bm{p }}\left( \bm{{\bm \varphi} } \right) +f_{\bm{p }}\left( \bm{{\bm \varphi} }+\bm{{\bm \delta} } \right) \right) \right.\nonumber
		\\
		&\quad\times \left.P_{\min}\left( \bm{{\bm \varphi} },\bm{{\bm \varphi} }+\bm{{\bm \delta} } \right) \mathrm{d}\bm{{\bm \varphi} } \right] h\mathrm{d}h,\label{wrw}
\end{align}
where the normalized weight vector $\bm{w}\in \mathbb{R} ^{3P}$, satisfying $\left\| \bm{w} \right\| _2=1$. $f_{\bm{p}}\left( \bm{{\bm \varphi} } \right) $ and $f_{\bm{p}}\left( \bm{{\bm \varphi} }+\bm{{\bm \delta} } \right) $ are the a priori probability density function (pdf) of $\bm{p}$ given $\bm{p}=\bm{{\bm \varphi} }$ and $\bm{p}=\bm{{\bm \varphi} }+\bm{{\bm \delta} }$, respectively, and $P_{\min}\left( \bm{{\bm \varphi} },\bm{{\bm \varphi} }+\bm{{\bm \delta} } \right) $ represents the minimum probability of error associated with the following binary hypothesis testing problem:
\begin{align}
	&\mathcal{H} _0:\bm{p}=\bm{{\bm \varphi} }; &&\bm{y}\sim f\left( \bm{y}| \bm{p}=\bm{{\bm \varphi} } \right),
	\nonumber
\\
&\mathcal{H} _1:\bm{p}=\bm{{\bm \varphi} }+\bm{{\bm \delta} };&& \bm{y}\sim f\left( \bm{y}| \bm{p}=\bm{{\bm \varphi} }+\bm{{\bm \delta} } \right),
\end{align}
where $f\left( \mathbf{y}|\bm{p}=\bm{{\bm \varphi}} \right) $ and $f\left( \mathbf{y}|\bm{p}=\bm{{\bm \varphi}}+\bm{{\bm \delta}} \right) $ are the conditional PDF of $\mathbf{y}$ given $\bm{p}=\bm{{\bm \varphi}}$ and $\bm{p}=\bm{{\bm \varphi}}+\bm{{\bm \delta}}$, respectively, and the prior probabilities of the two hypotheses are given by
\begin{align}
	\mathrm{Pr}\left( \mathcal{H} _0 \right) &=\frac{f_{\bm{p}}\left( \bm{{\bm \varphi} } \right)}{f_{\bm{p}}\left( \bm{{\bm \varphi} }+\bm{{\bm \delta} } \right) +f_{\bm{p}}\left( \bm{{\bm \varphi} } \right)},\nonumber
\\
\mathrm{Pr}\left( \mathcal{H} _1 \right) &=1-\mathrm{Pr}\left( \mathcal{H} _0 \right).\label{MSELB}
\end{align}
Then, the MSE of the $k$-th set of channel parameters in ${\bm{p}}$ can be lower bounded by
\begin{equation}
	\begin{aligned}
		\mathrm{MSE}\left( \bm{p}_k \right) &=\bm{w}_{k}^{\mathrm{T}}\mathbf{R}_{\bm{\epsilon }}\bm{w}_k
\\
&\ge \frac{1}{2}\int_0^{\infty}{}\max_{\bm{{\bm \delta}}:\bm{w}_{k}^{\mathrm{T}}\bm{{\bm \delta}}=h} \left[ \int_{\mathbb{R} ^{3P}}^{}{}\left( f_{\bm{p}}\left( \bm{{\bm \varphi}} \right) +f_{\bm{p}}\left( \bm{{\bm \varphi}}+\bm{{\bm \delta}} \right) \right) \right.
\\
&\quad\left.\times P_{\min}\left( \bm{{\bm \varphi}},\bm{{\bm \varphi}}+\bm{{\bm \delta}} \right) \mathrm{d}\bm{{\bm \varphi}} \right] h\mathrm{d}h,
	\end{aligned}
\end{equation}
where
\begin{equation}
\bm{p}_k=\bm{p}\left( \left( k-1 \right) P+1:kP \right) =\begin{cases}
	\bm{\theta }, k=1\\
	\bm{\nu }, k=2\\
	\bm{\eta }, k=3,\\
\end{cases}
\end{equation}
and 
\begin{equation}
	\bm{w}_k=\frac{1}{\sqrt{P}}\left[ \mathbf{0}_{\left( k-1 \right) P}^{\mathrm{T}},\mathbf{1}_{P}^{\mathrm{T}},\mathbf{0}_{\left( 3-k \right) P}^{\mathrm{T}} \right] ^{\mathrm{T}}.
\end{equation}
However, the expression of $P_{\min}\left( \bm{{\bm \varphi}},\bm{{\bm \varphi}}+\bm{{\bm \delta}} \right) \mathrm{d}\bm{{\bm \varphi}}$ is generally complicated, and directly substituting it into (\ref{MSELB}) usually makes it difficult to obtain an explicit closed-form expression. Fortunately, a lower bound for $P_{\min}\left( \bm{{\bm \varphi}},\bm{{\bm \varphi}}+\bm{{\bm \delta}} \right) \mathrm{d}\bm{{\bm \varphi}}$ for observation $\bm{y}$ is given by \cite{detIII}
\begin{align}\label{Pdelta}
		&P_{\min}\left( \bm{{\bm \varphi} },\bm{{\bm \varphi} }+\bm{{\bm \delta} } \right) \nonumber
		\\
		&\quad\ge P\left( \bm{{\bm \delta} } \right) \nonumber
\\
&\quad=e^{\left[ \mu \left( s;\bm{{\bm \delta} } \right) +\frac{1}{8}\frac{\partial ^2\mu \left( s;\bm{{\bm \delta} } \right)}{\partial s^2} \right]}\left.\mathcal{Q} \left( \frac{1}{2}\sqrt{\frac{\partial ^2\mu \left( s;\bm{{\bm \delta} } \right)}{\partial s^2}} \right) \right|_{s=\frac{1}{2}}^{},
\end{align}
where $\mathcal{Q} \left( x \right) =\int_x^{\infty}{}\frac{1}{\sqrt{2\pi}}e^{-\frac{v^2}{2}}\mathrm{d}v$ is the Q-function and 
\begin{align}
		\mu \left( s;\mathbf{{\bm \delta} } \right) &=\ln \int{}f\left( \mathbf{y}|\mathbf{{\bm \varphi} }+\mathbf{{\bm \delta} } \right) ^sf\left( \mathbf{y}|\mathbf{{\bm \varphi} } \right) ^{1-s}\mathrm{d}\mathbf{y}\nonumber
\\
&=\ln \int{}\frac{1}{\pi ^{N_{\mathrm{BS}}NM_{\mathrm{pilot}}}\left| \mathbf{R}_{\mathbf{y}|\mathbf{{\bm \varphi} }+\mathbf{{\bm \delta} }} \right|^s\left| \mathbf{R}_{\mathbf{y}|\mathbf{{\bm \varphi} }} \right|^{1-s}}\nonumber
\\
&\quad\times \exp \left( -\mathbf{y}^{\mathrm{H}}\left( s\mathbf{R}_{\mathbf{y}|\mathbf{{\bm \varphi} }+\mathbf{{\bm \delta} }}^{-1}+\left( 1-s \right) \mathbf{R}_{\mathbf{y}|\mathbf{{\bm \varphi} }}^{-1} \right) \mathbf{y} \right) \mathrm{d}\mathbf{y}\nonumber
\\
&=-\ln \left| s\mathbf{R}_{\mathbf{y}|\mathbf{{\bm \varphi} }+\mathbf{{\bm \delta} }}^{-1}+\left( 1-s \right) \mathbf{R}_{\mathbf{y}|\mathbf{{\bm \varphi} }}^{-1} \right|\nonumber
\\
&\quad-s\ln \left| \mathbf{R}_{\mathbf{y}|\mathbf{{\bm \varphi} }+\mathbf{{\bm \delta} }} \right|-\left( 1-s \right) \ln \left| \mathbf{R}_{\mathbf{y}|\mathbf{{\bm \varphi} }} \right|\label{CGF}
\end{align}
is the cumulant generating function (CGF) of the log-likelihood ratio $\ln \frac{f\left( \mathbf{y}|\mathbf{{\bm \varphi} }+\mathbf{{\bm \delta} } \right)}{f\left( \mathbf{y}|\mathbf{{\bm \varphi} } \right)}$. We present the following theorem that gives the closed-form expression for $P\left( \mathbf{{\bm \delta} } \right) $:
\begin{Theorem}\label{theorem2}
	Based on the CGF given in (\ref{CGF}), we have 
	\begin{equation}\label{Pdeltaexpression}
		P\left( \mathbf{{\bm \delta} } \right) \approx \begin{cases}
	\mathcal{Q} \left( \frac{1}{2}\sqrt{\mathbf{{\bm \delta} }^{\mathrm{T}}\mathbf{J{\bm \delta} }} \right) , &\mathbf{{\bm \delta} }\in \mathbf{\Delta },\\
	P_{\mathrm{NA}}\left( \mathbf{\bm \rho } \right) ,&\mathbf{{\bm \delta} }\notin \mathbf{\Delta },\\
\end{cases}
	\end{equation}
where $\left[ \mathbf{J} \right] _{ij}=\mathrm{Tr}\left( \mathbf{R}^{-1}\frac{\partial \mathbf{R}}{\partial p_i}\mathbf{R}^{-1}\frac{\partial \mathbf{R}}{\partial p_j} \right) $ is the Fisher matrix and 
\begin{equation}\label{PNA}
	P_{\mathrm{NA}}=e^{\psi \left( \mathbf{\bm \rho } \right)}\mathcal{Q} \left( \sqrt{\frac{1}{2}\sum_{i=0}^{P-1}{}\left( \frac{2N_{\mathrm{BS}}N\rho _i}{2+N_{\mathrm{BS}}N\rho _i} \right) ^2} \right), 
\end{equation}
where 
\begin{equation}
	\psi \left( \mathbf{\bm \rho } \right) =\sum_{i=0}^{P-1}\ln \left[ \frac{4\left( 1+N_{\mathrm{BS}}N\rho _i \right)}{\left( 2+N_{\mathrm{BS}}N\rho _i \right) ^2} \right] +\left( \frac{N_{\mathrm{BS}}N\rho _i}{2+N_{\mathrm{BS}}N\rho _i} \right) ^2,
\end{equation}
and the $i$-th element of $\mathbf{\bm \rho }$ is the SNR of the $i$-th propagation path, which is defined as $\rho _i=\frac{\sigma _{i,\mathrm{PL}}^{2}}{\sigma _{n}^{2}}$. The asymptotic region is defined as $\mathbf{\Delta }=\left\{ \mathbf{{\bm \delta} }|\mathbf{{\bm \delta} }^{\mathrm{T}}\mathbf{J{\bm \delta} }\ge 2\sum_{i=1}^P{}\left( \frac{2N_{\mathrm{BS}}N\rho _i}{2+N_{\mathrm{BS}}N\rho _i} \right) ^2 \right\} $. 

\end{Theorem}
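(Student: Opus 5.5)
The plan is to evaluate the CGF $\mu(s;\bm\delta)$ in (\ref{CGF}) and its second derivative at $s=\tfrac12$ in two regimes, and then substitute both into the bound (\ref{Pdelta}). The small-$\bm\delta$ regime gives the asymptotic branch. The large-$\bm\delta$ regime, where the two hypotheses' signal subspaces become effectively orthogonal, gives $P_{\mathrm{NA}}(\bm\rho)$. The threshold $\mathbf{\Delta}$ will be read off by matching the argument of the $\mathcal Q$-function in the two branches.

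\emph{Asymptotic (small $\bm\delta$) branch.} Write $\mathbf{R}_1=\mathbf{R}_{\mathbf{y}|\bm\varphi+\bm\delta}$ and $\mathbf{R}_0=\mathbf{R}_{\mathbf{y}|\bm\varphi}$. Expand $\mathbf{R}_1=\mathbf{R}_0+\sum_i\delta_i\,\partial\mathbf{R}/\partial p_i+\mathcal O(\|\bm\delta\|^2)$ and insert this into (\ref{CGF}). I will use $\ln|\mathbf I+\mathbf E|=\mathrm{Tr}\,\mathbf E-\tfrac12\mathrm{Tr}\,\mathbf E^2+\cdots$. The first-order terms cancel, since the CGF vanishes at $\bm\delta=\mathbf 0$ for every $s$. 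The quadratic term is the standard Gaussian result
\begin{equation*}
\mu(s;\bm\delta)\approx -\tfrac{s(1-s)}{2}\,\bm\delta^{\mathrm T}\mathbf J\bm\delta,
\end{equation*}
with $[\mathbf J]_{ij}=\mathrm{Tr}(\mathbf R^{-1}\partial_i\mathbf R\,\mathbf R^{-1}\partial_j\mathbf R)$. At $s=\tfrac12$ this gives $\mu=-\tfrac18\bm\delta^{\mathrm T}\mathbf J\bm\delta$ and $\partial_s^2\mu=\bm\delta^{\mathrm T}\mathbf J\bm\delta$. The exponent in (\ref{Pdelta}) is then $\mu+\tfrac18\partial_s^2\mu=0$, which leaves $\mathcal Q\big(\tfrac12\sqrt{\bm\delta^{\mathrm T}\mathbf J\bm\delta}\big)$.

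\emph{Non-asymptotic (large $\bm\delta$) branch.} Here I use the assumption stated before the theorem that the paths are incoherent and their parameters distinct. For $\bm\delta$ outside the main lobe, $\bm\phi(\bm\varphi_i+\bm\delta_i)$ is nearly orthogonal to every column of $\bm\Phi(\bm\varphi)$. I will approximate $\bm\Phi^{\mathrm H}\bm\Phi\approx N_{\mathrm{BS}}N\,\mathbf I$; this treats the spatial and temporal factors as having unit-modulus entries and absorbs the pilot AF-domain energy into $\rho_i$. The $2P$ columns of both hypotheses are then mutually orthogonal, so all determinants factor along eigen-directions.

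Under $\mathbf R_0$ each signal direction has eigenvalue $\sigma_n^2(1+\beta_i)$ with $\beta_i=N_{\mathrm{BS}}N\rho_i$, and it is seen with weight $1-s$. Under $\mathbf R_1$ the same holds on its own directions with weight $s$. On each of the $2P$ directions, $s\mathbf R_1^{-1}+(1-s)\mathbf R_0^{-1}$ reduces to a scalar such as $\sigma_n^{-2}\big(s+(1-s)/(1+\beta_i)\big)$, and the noise-only directions cancel. This yields
\begin{equation*}
\mu(s)=\sum_i\Big[(1-s)\ln(1+\beta_i)-\ln\big(1+s\beta_i\big)+s\ln(1+\beta_i)-\ln\big(1+(1-s)\beta_i\big)\Big],
\end{equation*}
up to the bookkeeping of $s$ versus $1-s$. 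At $s=\tfrac12$:
\begin{itemize}
\item $\mu=\sum_i\ln\frac{4(1+\beta_i)}{(2+\beta_i)^2}$;
\item $\partial_s^2\mu=2\sum_i\big(\frac{2\beta_i}{2+\beta_i}\big)^2$, because $\partial_s^2[-\ln(1+s\beta)]=\beta^2/(1+s\beta)^2=4\beta^2/(2+\beta)^2$ at $s=\tfrac12$, and the other term gives the same amount.
\end{itemize}
Substituting into (\ref{Pdelta}):
\begin{itemize}
\item the exponent becomes $\psi(\bm\rho)$, since $\tfrac18\cdot 2\cdot 4\big(\tfrac{\beta_i}{2+\beta_i}\big)^2=\big(\tfrac{\beta_i}{2+\beta_i}\big)^2$;
\item the $\mathcal Q$-argument becomes $\tfrac12\sqrt{2\sum(\cdot)^2}=\sqrt{\tfrac12\sum(\cdot)^2}$.
\end{itemize}
This is (\ref{PNA}).

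\emph{Boundary of $\mathbf\Delta$.} The quadratic approximation $\bm\delta^{\mathrm T}\mathbf J\bm\delta$ grows without bound, while the true $\partial_s^2\mu$ saturates at the orthogonal-subspace value. I define the switch point as the place where they coincide, $\bm\delta^{\mathrm T}\mathbf J\bm\delta=2\sum_i\big(\frac{2\beta_i}{2+\beta_i}\big)^2$; this is the set $\mathbf\Delta$ as written in the theorem.

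I expect the main obstacle to be justifying the orthogonality approximation: the cross-Gram terms between $\bm\phi(\bm\varphi)$ and $\bm\phi(\bm\varphi+\bm\delta)$ are only sidelobe-small, not zero. I would argue this through the Vandermonde/Dirichlet decay of the inner products. The region convention is also awkward: the inequality defining $\mathbf\Delta$ must be read with care, since literally it marks where the saturated branch applies, and this is why the result is stated only as "$\approx$". I would state this explicitly.
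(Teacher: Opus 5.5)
Your proposal is correct and follows the same route as the paper's appendix proof. In the small-$\bm{\delta}$ regime, a second-order expansion of the CGF gives $\mu(s;\bm{\delta})\approx\frac{s^2-s}{2}\bm{\delta}^{\mathrm T}\mathbf{J}\bm{\delta}$ and hence the asymptotic branch; in the large-$\bm{\delta}$ regime, the asymptotically orthogonal signal-subspace eigenvalue argument yields $\mu|_{s=1/2}$ and $\partial_s^2\mu|_{s=1/2}$ for $P_{\mathrm{NA}}$, and your per-direction bookkeeping simply spells out what the paper states in one line.

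Your treatment of $\mathbf{\Delta}$ is also right. The paper never derives it, and its later integration of the asymptotic branch over $0\le h\le\tilde h$ confirms that the defining inequality should read $\le$. The only small fix is your justification for the linear terms in $\bm{\delta}$ cancelling: $\mu$ vanishing at $\bm{\delta}=\mathbf{0}$ is not enough on its own. They cancel by direct computation, or because $\mu(s;\bm{\delta})\le 0$ for $s\in[0,1]$ with equality at $\bm{\delta}=\mathbf{0}$, so $\bm{\delta}=\mathbf{0}$ is a maximizer.
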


\textit{Proof:} The proof is provided in Appendix \ref{proofoftheorem2}. \hfill $\blacksquare$

Based on Theorem \ref{theorem2}, (\ref{Pdelta}) is lower bounded by
\begin{equation}
	\begin{aligned}
		&\frac{1}{2}\int_{\mathbb{R} ^{3P}}^{}{}\left( f_{\bm{p}}\left( \mathbf{{\bm \varphi} } \right) +f_{\bm{p}}\left( \mathbf{{\bm \varphi} }+\mathbf{{\bm \delta} } \right) \right) P_{\min}\left( \mathbf{{\bm \varphi} },\mathbf{{\bm \varphi} }+\mathbf{{\bm \delta} } \right) \mathrm{d}\mathbf{{\bm \varphi} }
		\\
		&\quad\ge P\left( \mathbf{{\bm \delta} } \right) \prod_{k=1}^3{}\prod_{i=1}^P{}\frac{\left( \zeta _k-{\bm \delta} _{k,i} \right)}{\zeta _k},
	\end{aligned}
\end{equation}
where $\zeta _k$ is the range of the uniform distribution for the $k$-th group of channel parameters in $\bm{p}$. Hence, (\ref{wrw}) reduces to
\begin{equation}\label{newmax}
	\bm{w}^{\mathrm{T}}\mathbf R_{\bm\epsilon}\bm w\ge \max_{\mathbf{{\bm \delta} }:\bm{w}_{k}^{\mathrm{T}}\mathbf{{\bm \delta} }=h} P\left( \mathbf{{\bm \delta} } \right) \prod_{k=1}^3{}\prod_{i=1}^P{}\frac{\left( \zeta _k-{\bm \delta} _{k,i} \right)}{\zeta _k}.
\end{equation}
Substituting (\ref{Pdeltaexpression}) into (\ref{newmax}), we relax the maximization problem in (\ref{wrw}) as \cite{zzb1d}
\begin{equation}
	\begin{aligned}
		&\max_{\mathbf{{\bm \delta} }:\bm{w}_{k}^{\mathrm{T}}\mathbf{{\bm \delta} }=h} \frac{P\left( \mathbf{{\bm \delta} } \right)}{\prod_{k=1}^3{}\zeta _{k}^{P}}\prod_{k=1}^3{}\prod_{i=1}^P{}\left( \zeta _k-{\bm \delta} _{k,i} \right) 
\\
&\approx \max_{\mathbf{{\bm \delta} }:\bm{w}_{k}^{\mathrm{T}}\mathbf{{\bm \delta} }=h} \frac{P_{\mathrm{NA}}}{\prod_{k=1}^3{}\zeta _{k}^{P}}\prod_{k=1}^3{}\prod_{i=1}^P{\left( \zeta _k-{\bm \delta} _{k,i} \right)}
\\
&\quad+\max_{\mathbf{{\bm \delta} }\in \Delta :\bm{w}_{k}^{\mathrm{T}}\mathbf{{\bm \delta} }=h} P_{\mathrm{A}}\left( \mathbf{{\bm \delta} } \right) -P_{\mathrm{NA}}.\label{maxtwoparts}
	\end{aligned}
\end{equation}
The second term on the right-hand side of the approximate equality is due to the fact that $\mathbf{{\bm \delta} }\rightarrow \mathbf{0}$ in the asymptotic region. According to the inequality of arithmetic and geometric means, we have the closed-form for the first maximization term as
\begin{equation}
	\begin{aligned}
		&\max_{\mathbf{{\bm \delta} }:\bm{w}_{k}^{\mathrm{T}}\mathbf{{\bm \delta} }=h} \frac{P\left( \mathbf{{\bm \delta} } \right)}{\prod_{k=1}^3{}\zeta _{k}^{P}}\prod_{k=1}^3{}\prod_{i=1}^P{}\left( \zeta _k-{\bm \delta} _{k,i} \right) 
\\
&\quad=P_{\mathrm{NA}}\prod_{k=1}^3{}\left( 1-\frac{h}{\sqrt{P}\zeta _k} \right) ^P.
	\end{aligned}
\end{equation}
The second maximization term is given by
\begin{equation}\label{PA}
	\max_{\mathbf{{\bm \delta} }\in \Delta :\bm{w}_{k}^{\mathrm{T}}\mathbf{{\bm \delta} }=h} P_{\mathrm{A}}\left( \mathbf{{\bm \delta} } \right) =\mathcal{Q} \left( \frac{1}{2\sqrt{\bm{w}_{k}^{\mathrm{T}}\mathbf{J}^{-1}\bm{w}_k}}h \right),
\end{equation}
where the maximum value is achieved at $\mathbf{{\bm \delta} }=\frac{\mathbf{J}^{-1}\bm{w}_k}{\bm{w}_{k}^{\mathrm{T}}\mathbf{J}^{-1}\bm{w}_k}h$.
Substituting (\ref{maxtwoparts})-(\ref{PNA}) into (\ref{newmax}), we obtain an approximate closed-form expression for ZZB as
\begin{align}\label{zzb}
		\bm{w}_{k}^{\mathrm{T}}\mathbf{R}_{\mathbf{\epsilon }}\bm{w}_k&\ge P_{\mathrm{NA}}\int_0^{\sqrt{P}\zeta}{}\left( 1-\frac{h}{\sqrt{P}\zeta _k} \right) ^Ph\mathrm{d}h\nonumber
\\
&\quad+\int_0^{\tilde{h}}{}\left[ \mathcal{Q} \left( \frac{1}{2\sqrt{\bm{w}_{k}^{\mathrm{T}}\mathbf{J}^{-1}\bm{w}_k}}h \right) -P_{\mathrm{NA}} \right] h\mathrm{d}h\nonumber
\\
&\approx P_{\mathrm{NA}}\frac{12\bm{w}_{k}^{\mathrm{T}}\mathbf{R}_{\bm{p}}\bm{w}_k}{\left( P+1 \right) \left( P+2 \right)}\nonumber
\\
&\quad+\int_0^{\tilde{h}}{}\frac{1}{\sqrt{2\pi}}e^{-\frac{h^2}{8\bm{w}_{k}^{\mathrm{T}}\mathbf{J}^{-1}\bm{w}_k}}\frac{h^2}{4\sqrt{\bm{w}_{k}^{\mathrm{T}}\mathbf{J}^{-1}\bm{w}_k}}\mathrm{d}h\nonumber
\\
&=P_{\mathrm{NA}}\frac{12\bm{w}_{k}^{\mathrm{T}}\mathbf{R}_{\bm{p}}\bm{w}_k}{\left( P+1 \right) \left( P+2 \right)}\nonumber
\\
&\quad+\bm{w}_{k}^{\mathrm{T}}\mathbf{J}^{-1}\bm{w}_k\Gamma _{\frac{3}{2}}\left( \frac{\tilde{h}^2}{8\bm{w}_{k}^{\mathrm{T}}\mathbf{J}^{-1}\bm{w}_k} \right),
\end{align}
where the approximate equality is due to the tight upper bound of the Q-function, $\tilde{h}_k$ represents the boundary between the asymptotic region and the non-asymptotic region, which is given by
{
\begin{equation}
	\tilde{h}_{k}^{2}\le 2\bm{w}_{k}^{\mathrm{T}}\mathbf{J}^{-1}\bm{w}_k\sum_{i=0}^{P-1}{}\left( \frac{2N_{\mathrm{BS}}N\rho _i}{2+N_{\mathrm{BS}}N\rho _i} \right) ^2,
\end{equation}
}
with
\begin{equation}
	\tilde{h}_{k}^{2}=\min \left\{ 2\bm{w}_{k}^{\mathrm{T}}\mathbf{J}^{-1}\bm{w}_k\sum_{i=0}^{P-1}{}\left( \frac{2N_{\mathrm{BS}}N\rho _i}{2+N_{\mathrm{BS}}N\rho _i} \right) ^2,K\zeta _{k}^{2} \right\} ,
\end{equation}
and $\Gamma _a\left( \cdot \right) $ represents the incomplete gamma function with parameter $a$.

As can be observed from  (\ref{zzb}), the ZZB constitutes a global lower bound. In the regime of small $\mathbf{{\bm \delta} }$, the ZZB is dominated by the asymptotic CRB. As the SNR $\rho _i$ decreases, the non-asymptotic component $P_{\mathrm{NA}}$ of (\ref{zzb}) increasingly dominates and converges to the parameter covariance.

\section{Simulation Results}\label{SimulationResults}

In this section, simulation results are demonstrated to examine the performance of the proposed channel estimation methods and the comparison with benchmarks. Table \ref{simparameters} gives the simulation parameters. 
The SNR is defined as $\mathrm{SNR}\triangleq \frac{\left\| \mathcalbf{Y} \right\| _{\mathrm{F}}^{2}}{\left\| \mathcalbf{N} \right\| _{\mathrm{F}}^{2}}$. {The maximum MS velocities are set to 300 and 600 kph, resulting in maximum Doppler frequencies of $\nu _{\max}=$ 4167 and 8333 Hz.} As the maximum Doppler tap is $\alpha _{\max}=\nu _{\max}T=0.14$, we set $c_1=\frac{1}{2M}$ to reduce the pilot overhead.

\begin{table}[ht]
	\centering
	
	\caption{Simulation Parameters}
	\label{simparameters}
	\begin{tabular}{|c|c|}
		\hline
		\textbf{Parameter} & \textbf{Values} \\ 
		\hline
		Carrier frequency (GHz) & 15 \\
		\hline
		Subcarrier spacing (kHz) & 30 \\
		\hline
		\makecell[c]{Total number of subcarriers  and\\ number of symbols for pilot \((M, N)\)}  & (1024,10) \\
		\hline
		Length of CPP \(M_{\text{CPP}}\) & 72 \\
		\hline
		Number of BS antennas  & 16 \\
		\hline
		Number of antenna spacing & ${\lambda}/{2}$	\\
		\hline
		Number of MS antennas & 1 \\
		\hline
		The number of propagation paths & 5 \\
		\hline
		Maximum MS velocity (kph) & 300/ 600 \\
		\hline
		Symbol Modulation Scheme & \makecell[c]{16/ 64/\\ 256 QAM} \\
		\hline
	\end{tabular}
	\vspace{-0.5cm}
\end{table}

\subsection{Channel Parameters Estimate Performance}

\begin{figure*}[t]
	\subfigure[AoA.]{
	\begin{minipage}[t]{0.23\linewidth}
	\centering
	\includegraphics[width=1.0\linewidth]{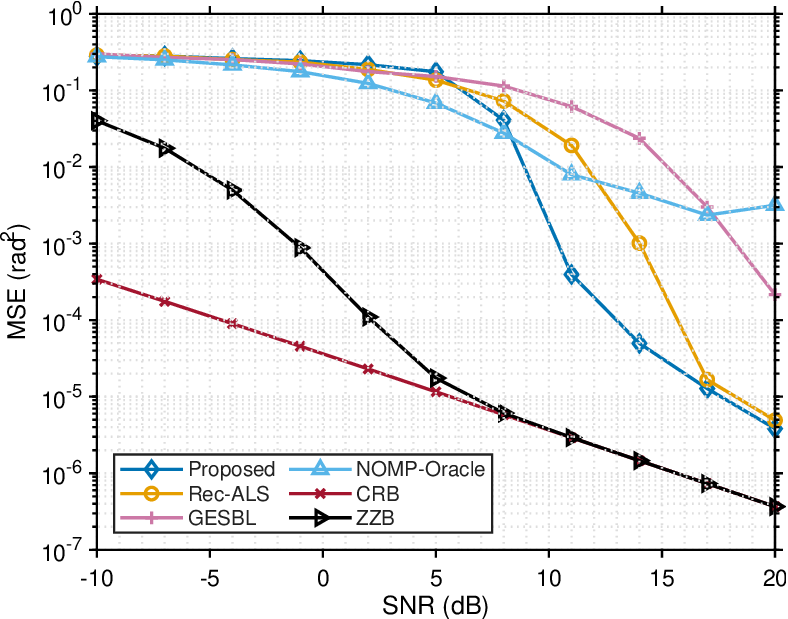}
	\label{MSEAOA}
	\end{minipage}%
	\hfill
	}
	\subfigure[Doppler.]{
	\begin{minipage}[t]{0.23\linewidth}
	\centering
	\includegraphics[width=1.0\linewidth]{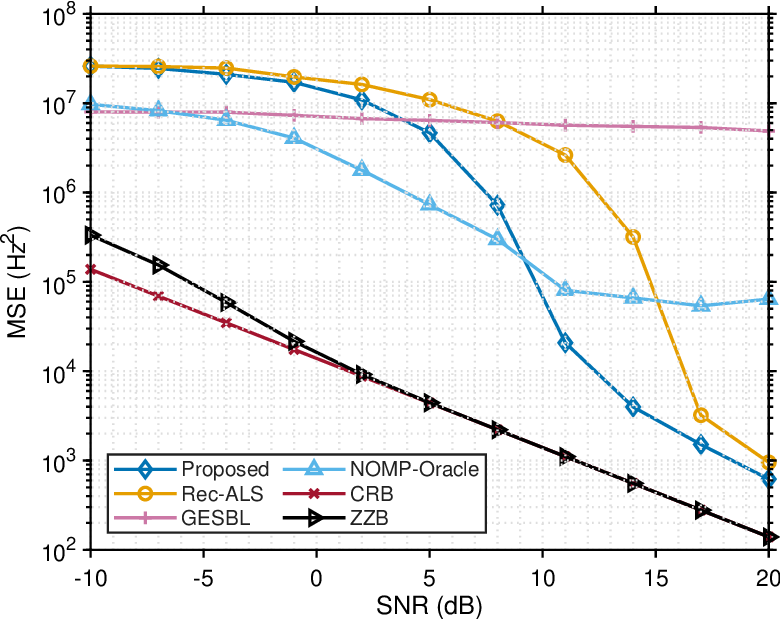}
	\label{MSEDoppler}
	\end{minipage}%
	\hfill
	}
	\subfigure[Delay tap.]{
	\begin{minipage}[t]{0.23\linewidth}
	\centering
	\includegraphics[width=1.0\linewidth]{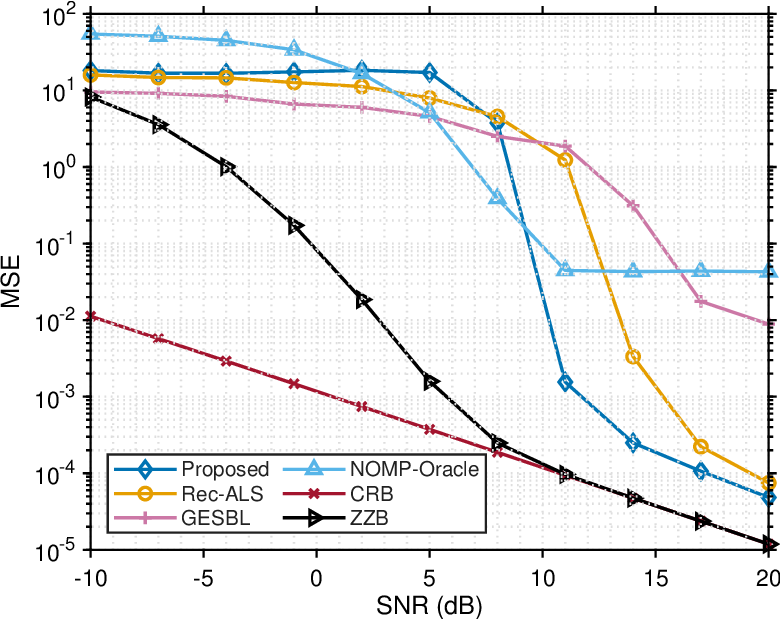}
	\label{MSEDelay}
	\end{minipage}%
	\hfill
	}
	\subfigure[Channel gain.]{
	\begin{minipage}[t]{0.23\linewidth}
	\centering
	\includegraphics[width=1.0\linewidth]{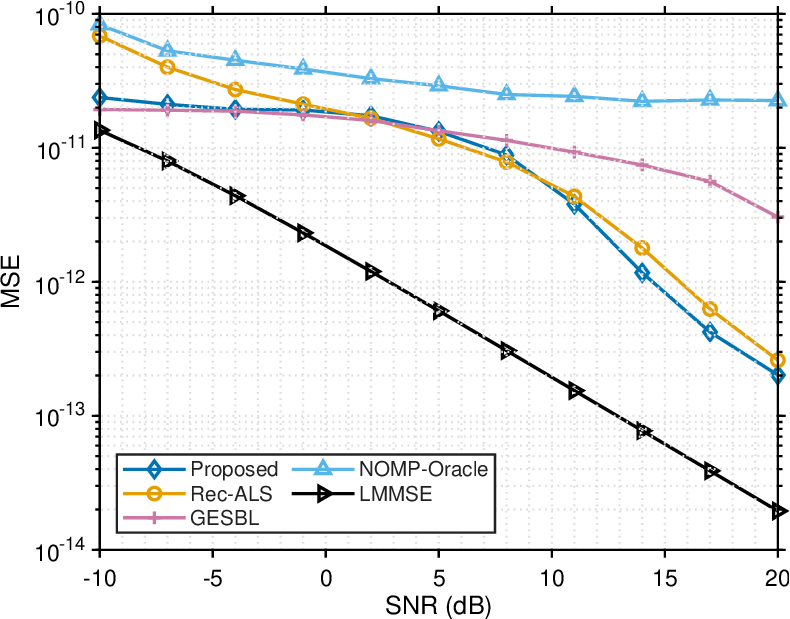}
	\label{MSEgain}
	\end{minipage}%
	\hfill
	}
	\caption{ The MSEs performance versus SNR.}
	\label{MSE}
	\vspace{-0.5cm}
\end{figure*}

\begin{figure}[ht]
	\centering
	\begin{minipage}[htbp]{1.0\linewidth}
	\centering
	\includegraphics[width=1.0\linewidth]{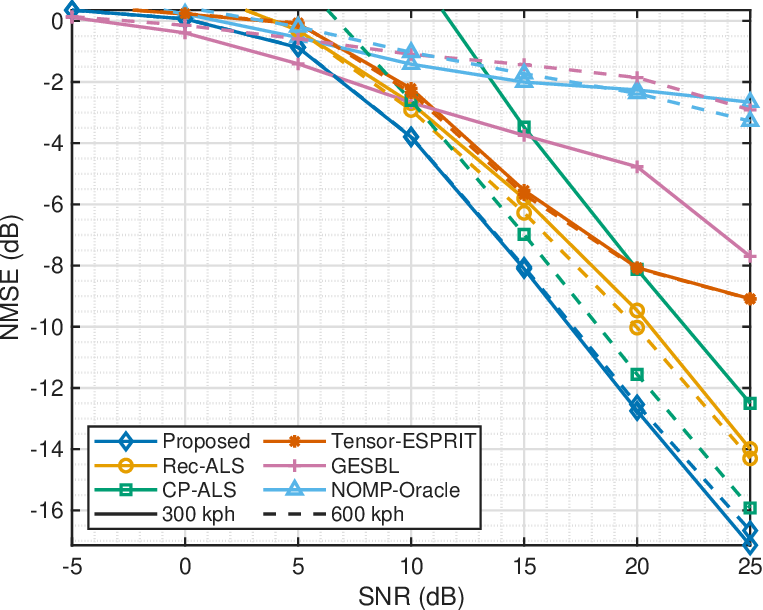}
	\caption{The NMSE performance versus SNR.}
	\label{NMSEvsSNR}
	\end{minipage}%
		\vspace{-0.5cm}
	\hfill
\end{figure}

Fig.~\ref{MSE} shows the MSE performance versus SNR. The total number of subcarriers is set to $M=512$. The length of CPP is set to $M_{\text{CPP}}=45$. {The RecALS \cite{RecALS}, the grid-evolution SBL (GESBL) \cite{GESBL}, and the Newtonized orthogonal matching pursuit (NOMP) \cite{nomp} are selected as benchmarks, which are denoted as ``Rec-ALS'', ``GESBL'', and ``NOMP'', respectively.}
Specifically, the CRB and ZZB serve as benchmarks for AoA, Doppler, and delay estimation, while the LMMSE is used for channel gain. 
It can be observed that in the low SNR regime (SNR $<$ 5 dB), the ZZBs for AoA and delay effectively capture the rapid degradation of estimation error. 
When SNR $=$ -10 dB, the MSE of the proposed algorithm is comparable to the ZZB.
However, in the SNR range of -5 dB to 5 dB, a noticeable gap exists between the MSE of the proposed algorithm and the ZZB. 
This discrepancy stems from the fact  that the ZZB remains a performance bound derived from the  ML criterion \cite{ZZB}, and its behavior in the non-asymptotic regime is predominantly governed by the array gain $N_{\mathrm{BS}}N\rho _i$.
While the proposed subspace-based algorithm for off-grid parameter estimation offers significantly higher computational efficiency, it generally yields suboptimal performance relative to ML estimation (MLE).
Furthermore, the stochastic interaction between MSE for subspace-based method and noise is complicated and cannot be straightforwardly formulated as a function of array gain $N_{\mathrm{BS}}N\rho _i$.
In addition, due to the relatively short symbol duration $T_{\mathrm{sym}}$, the MSE of Doppler estimation in the low-SNR regime ($\leq$ 5 dB) is considerably larger than both the ZZB and the CRB.
While the SNR $\geq$ 10 dB, the ZZBs asymptotically converges to CRB. The MSEs of the proposed algorithm for AoA, Doppler and delay become close to ZZB and CRB.
Finally, the MSE of the channel gain exhibits a similar trend with respect to the LMMSE: it remains relatively close in both the low SNR and medium-to-high SNR regions, while showing a larger gap in the low-to-medium SNR range.

\subsection{Communication Performance Comparison with Existing Methods}

In this subsection, we examine the communication performance for the proposed channel estimation algorithms and the benchmarks.
The NMSE performance is first examined. The NMSE is defined as 
\begin{equation}
	\mathrm{NMSE}\triangleq \mathbb{E} \left\{ \frac{\left\| \hat{\mathcalbf{H}}^{\mathrm{STAF}}-\mathcalbf{H} ^{\mathrm{STAF}} \right\| _{F}^{2}}{\left\| \mathcalbf{H} ^{\mathrm{STAF}} \right\| _{F}^{2}} \right\},
\end{equation}
where $\mathcalbf{H} ^{\mathrm{STAF}}\in \mathbb{C} ^{N_{\mathrm{BS}}\times N\times M}$ and $\hat{\mathcalbf{H}}^{\mathrm{STAF}}\in \mathbb{C} ^{N_{\mathrm{BS}}\times N\times M}$ are the actual STAF channel and the estimated STAF channel, respectively.

{
Fig.~\ref{NMSEvsSNR} illustrates the NMSE performance for the proposed methods and benchmarks.
The CP-ALS \cite{zhouzhoujsac}, the RecALS, the GESBL , the tensor-ESPRIT \cite{tensorESPRIT} and NOMP are selected as benchmarks.
From Fig.~\ref{NMSEvsSNR}, it can be observed that the proposed method achieves better accurate channel estimation performance than the benchmarks. 
The NMSE performance of the RecALS method is also relatively close to that of the proposed method. 
The main reason is both the proposed method and the RecALS method effectively exploit the rotation-invariant properties in both the spatial and time domains. 
Nevertheless, the inherently weak convergence of alternative LS process limits the average NMSE performance of RecALS.
In addition, the NOMP algorithm based on \cite{nomp} performs unsatisfactorily in multipath channel estimation. 
The reason is that the NOMP algorithm discards the off-diagonal blocks of the Hessian matrix during global iterations, which means the algorithm essentially still performs local iterations for each path separately. 
As a result, the iteration process may incurs significant bias.
Furthermore, it is worth notice that, as the velocity increases from 300 to 600 kph, the enlarged Doppler separation substantially improves the numerical conditioning and estimation performance of unconstrained CP-ALS, whereas RecALS exhibits only marginal gains because its Vandermonde rectification already mitigates the correlation among the temporal factors.
}

\begin{figure*}[t]
	\centering
	\subfigure[16 QAM.]{
	\begin{minipage}[t]{0.3\linewidth}
	\centering
	\includegraphics[width=1.0\linewidth]{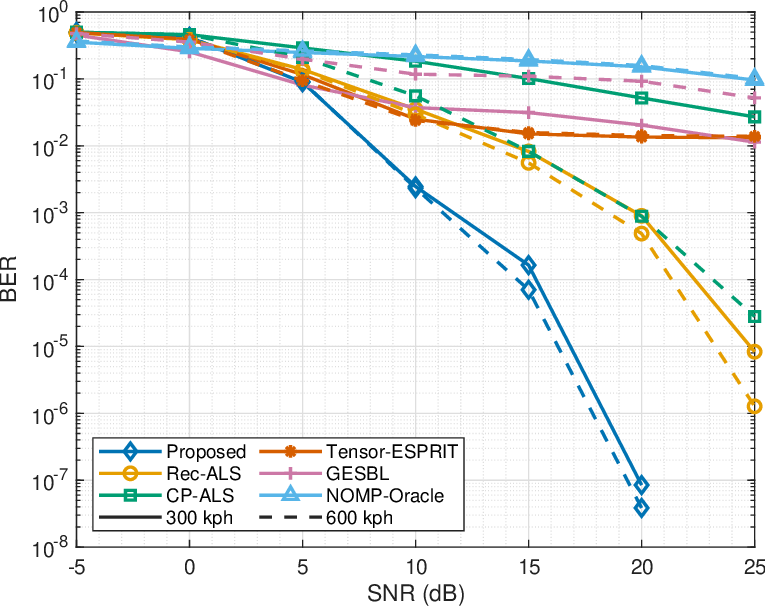}
	\label{BERvsSNR16QAM}
	\end{minipage}%
	\hfill
	}
	\subfigure[64 QAM.]{
	\begin{minipage}[t]{0.3\linewidth}
	\centering
	\includegraphics[width=1.0\linewidth]{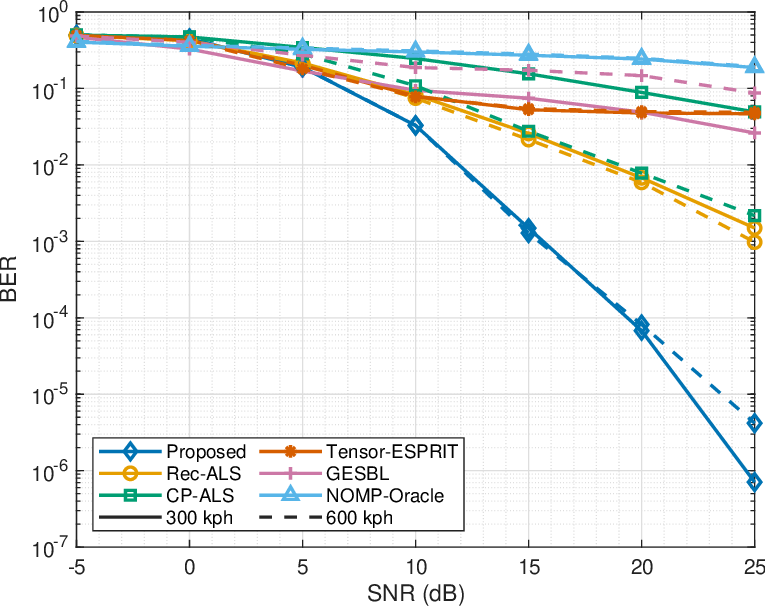}
	\label{BERvsSNR64QAM}
	\end{minipage}%
	\hfill
	}
	\subfigure[256 QAM.]{
	\begin{minipage}[t]{0.3\linewidth}
	\centering
	\includegraphics[width=1.0\linewidth]{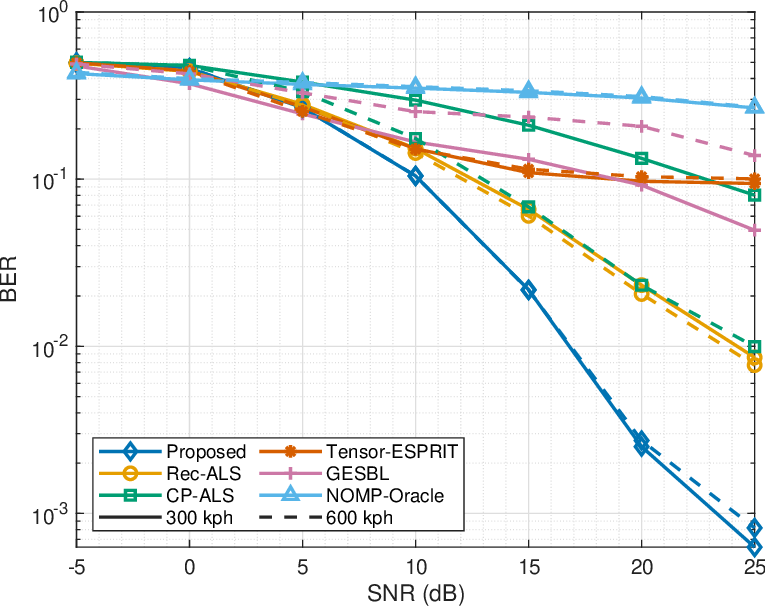}
	\label{BERvsSNR256QAM}
	\end{minipage}%
	\hfill
	}
	\caption{The BER performance versus SNR for different order QAM modulations.}
	\label{BER}
	\vspace{-0.5cm}
\end{figure*}

{
The BER performance versus SNR for the proposed method and benchmarks for different order QAM modulations is demonstrated in Fig.~\ref{BER}. 
From Fig.~\ref{BER}, the BER performance of the proposed method outperforms that of the benchmarks among all the three QAM modulation orders.In particular, at high SNRs, the BER of the proposed method falls below $10^{-4}$ for 64-QAM and $10^{-2}$ for 256-QAM, respectively, demonstrating its potential for high-order QAM transmission in MIMO-AFDM systems.}

\begin{figure}[t]
	\centering
	\begin{minipage}[htbp]{1.0\linewidth}
	\centering
	\includegraphics[width=1.0\linewidth]{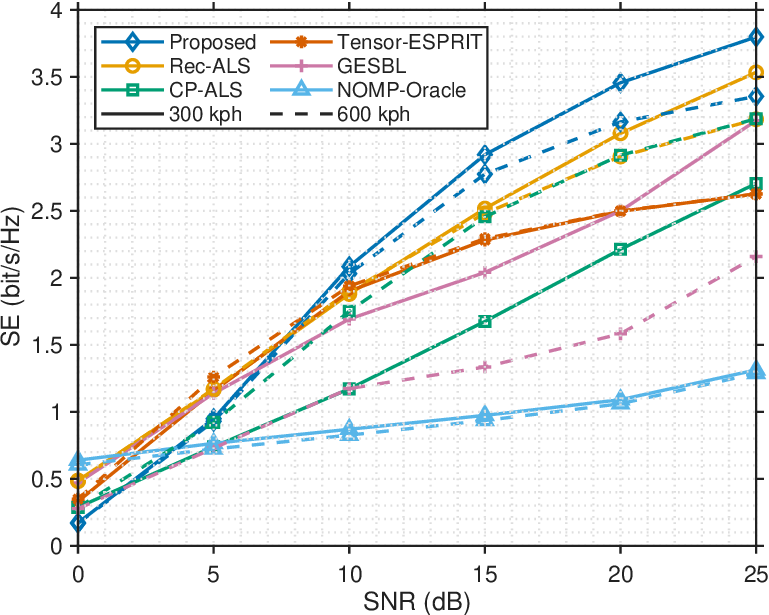}
	\caption{The SE performance versus SNR.}
	\label{SE}
	\end{minipage}%
	\vspace{-0.5cm}
	\hfill
\end{figure}

Fig.~\ref{SE} illustrates the SE performance versus SNR. The proposed method achieves SE performance comparable to that of the RecALS algorithm. In the SNR range of 0--5 dB, RecALS exhibits slightly better SE performance. However, when the SNR exceeds 5 dB, the proposed method achieves higher SE because its parameter-estimation MSE decreases rapidly with increasing SNR and gradually approaches the CRB, resulting in more accurate channel reconstruction and, consequently, improved SE performance.

\begin{figure}[t]
	\centering
	\begin{minipage}[ht]{1.0\linewidth}
	\centering
	\includegraphics[width=1.0\linewidth]{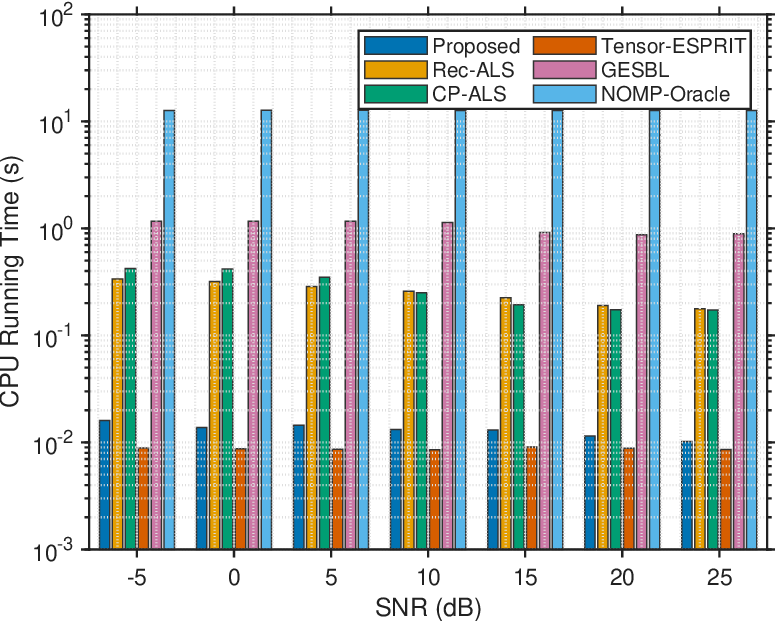}
	\caption{The average CPU running time versus SNR.}
	\label{Time}
	\end{minipage}%
	\vspace{-0.5cm}
	\hfill
\end{figure}

{To evaluate the computational efficiency of different AFDM channel estimation algorithms, the average CPU running time versus SNR is shown in Fig.~\ref{Time}. As shown in Fig.~\ref{Time}, the NOMP algorithm requires a relatively long CPU running time because it updates the gradient matrix, Hessian matrix, and objective function at each iteration, resulting in high computational complexity. The GESBL algorithm also incurs a relatively high computational cost due to its iterative Bayesian posterior inference and repeated updates of the sparsity hyperparameters and off-grid parameters, which involve large-scale matrix operations in each iteration.} 
Furthermore, compared with NOMP, both the classical CP-ALS algorithm and the RecALS algorithm reduce the computational time by approximately one order of magnitude. 
This highlights the advantage of tensor decomposition-based methods in achieving low complexity when processing high-dimensional signals. 
Finally, since the proposed algorithm is based on Vandermonde-structured TT decomposition and involves no iterative process, the computational time of the proposed algorithm is further reduced by approximately one order of magnitude compared with the CP-ALS and RecALS algorithms. 
The low computational complexity of the proposed algorithm makes it highly suitable for deployment in high-mobility scenarios.

\begin{figure}[t]
	\centering
	\begin{minipage}[htbp]{1.0\linewidth}
	\centering
	\includegraphics[width=1.0\linewidth]{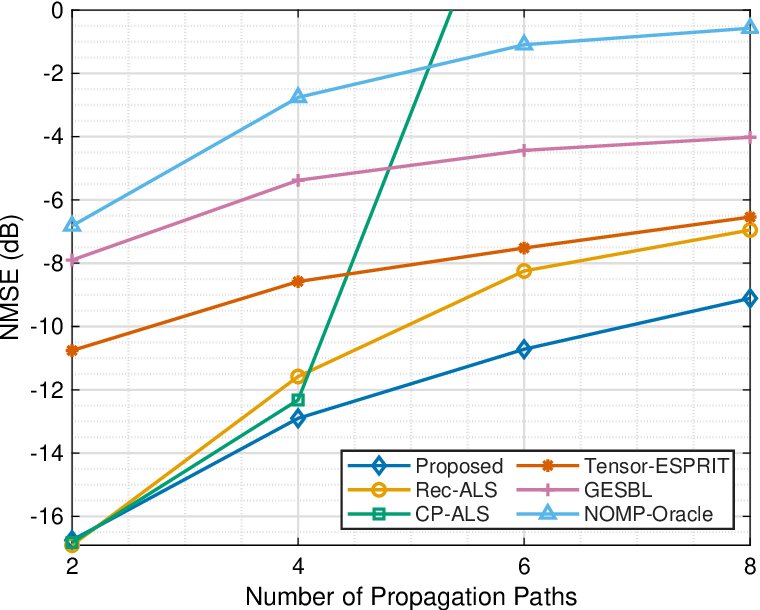}
	\caption{The NMSE performance versus the number of propagation paths, the SNR is 20 dB.}
	\label{NMSEvsPathNum}
	\end{minipage}%
	\vspace{-0.5cm}
	\hfill
\end{figure}

To further investigate the estimation performance of the algorithms under channels with different propagation scenarios, Fig.~\ref{NMSEvsPathNum} shows the NMSE performance versus the number of propagation paths, where the SNR is set to 20 dB. It can be observed from Fig.~\ref{NMSEvsPathNum} that the NMSE performance of the classical CP-ALS algorithm is mostly affected with the increase in the number of channel propagation paths. 
As the path number exceeds four, the NMSE performance of the CP-ALS deteriorates sharply.
{The main reason is the poor convergence behavior of the CP-ALS when processing the high-rank tensors \cite{alsconverge}.}
Furthermore, the performance degradation of the RecALS algorithm is less pronounced as the number of paths increases. This is attributed to the fact that, unlike the classical CP-ALS, the RecALS fully leverages the Vandermonde factor matrices within the tensor.
Finally, the proposed algorithm incurs less performance loss compared to the CP-ALS and the RecALS. In contrast to these two ALS-based algorithms, the proposed TT-based algorithm demonstrates superior robustness in processing high-rank tensor signals.

\begin{figure}[t]
	\centering
	\begin{minipage}[htbp]{1.0\linewidth}
	\centering
	\includegraphics[width=1.0\linewidth]{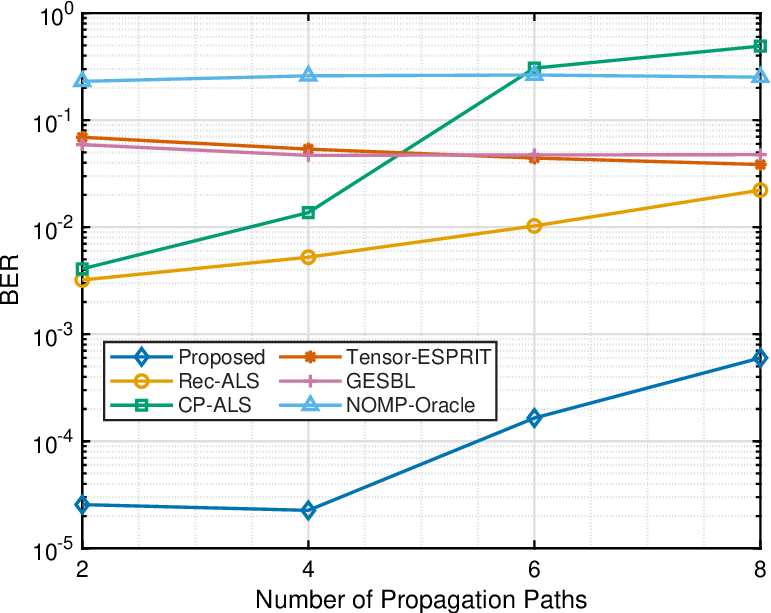}
	\caption{The BER performance versus the number of propagation paths, the SNR is 20 dB.}
	\label{BERvsPathNum}
	\end{minipage}%
	\vspace{-0.5cm}
	\hfill
\end{figure}

{
Fig.~\ref{BERvsPathNum} evaluates the BER performance versus the number of propagation paths. As illustrated in Fig.~\ref{BERvsPathNum}, as the number of channel paths increases from 2 to 8, the BERs of the proposed method and the ALS-based schemes deteriorate because the higher tensor rank increases the inter-path interference and reduces the separability of the spatial, temporal, and affine-frequency factors. Nevertheless, by exploiting the structured tensor representation of the AFDM channel, the proposed algorithm still achieves a lower BER than the benchmark schemes.
}

\begin{figure}[t]
	\centering
	\begin{minipage}[htbp]{1.0\linewidth}
	\centering
	\includegraphics[width=1.0\linewidth]{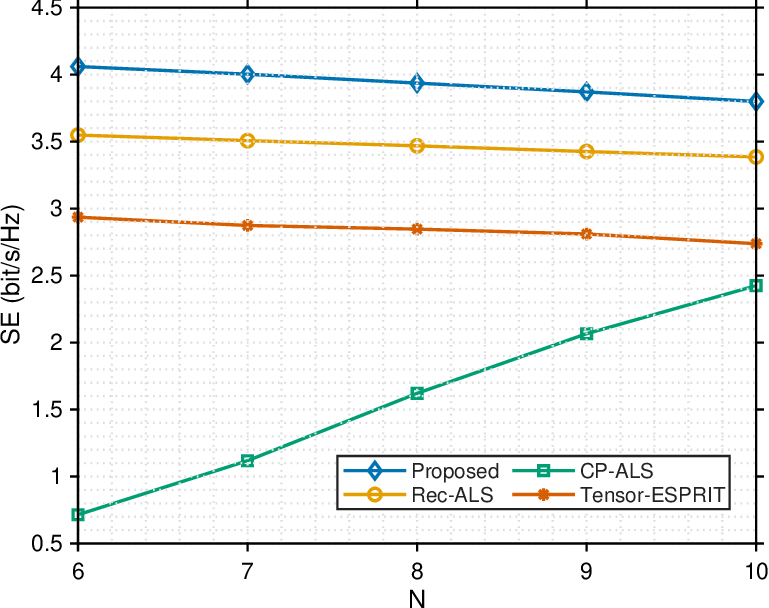}
	\caption{ The SE performance versus the number of pilot symbols, the SNR is 20 dB.}
	\label{SEvsN}
	\end{minipage}%
	\vspace{-0.5cm}
	\hfill
\end{figure}

\begin{figure}[t]
	\centering
	\begin{minipage}[htbp]{1.0\linewidth}
	\centering
	\includegraphics[width=1.0\linewidth]{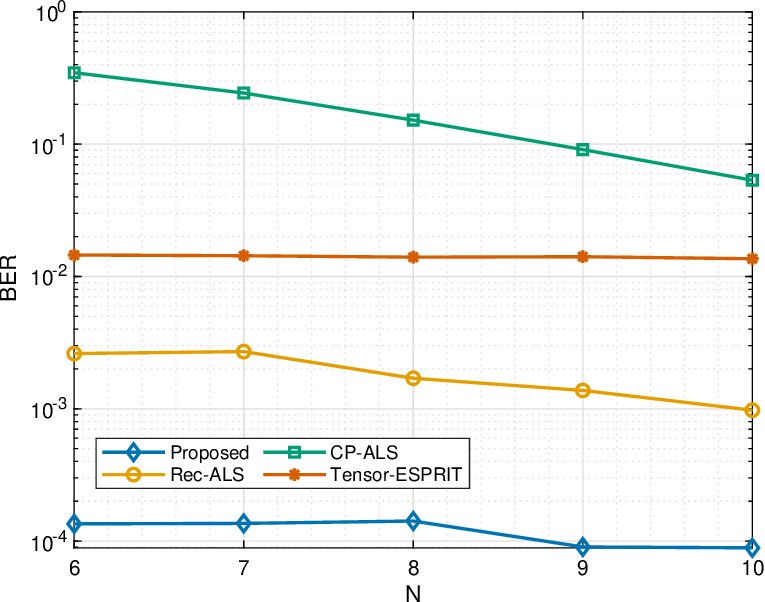}
	\caption{ The BER performance versus the number of pilot symbols, the SNR is 20 dB.}
	\label{BERvsN}
	\end{minipage}%
	\vspace{-0.5cm}
	\hfill
\end{figure}

{Fig.~\ref{SEvsN} and Fig.~\ref{BERvsN} illustrate the SE and BER performance, respectively, versus the number of pilot symbols $N$ for four tensor-based methods. As $N$ increases, the reduced correlation among the tensor factors improves the numerical conditioning of the CP decomposition, enabling CP-ALS to achieve substantial performance gains in terms of both a lower BER and a higher SE. In contrast, the other tensor-based methods achieve only modest BER reductions as $N$ increases, while their SE decreases because the limited improvement in channel estimation accuracy is insufficient to compensate for the additional pilot overhead introduced by a larger $N$.}

\subsection{Versus OTFS and OFDM}
{
To further investigate the performance comparison between AFDM, OTFS and OFDM in more realistic environments, we simulate their BER and SE performance under conditions of fractional delay and Doppler. The total number of subcarriers is set to $M=512$. The number of symbols in the OTFS frame is $N=13$. The number of pilots used for AFDM is set to $N_p=10$. The other parameters are the same as in Table \ref{simparameters}.
\begin{figure}[t]
	\centering
	\begin{minipage}[ht]{1.0\linewidth}
	\centering
	\includegraphics[width=1.0\linewidth]{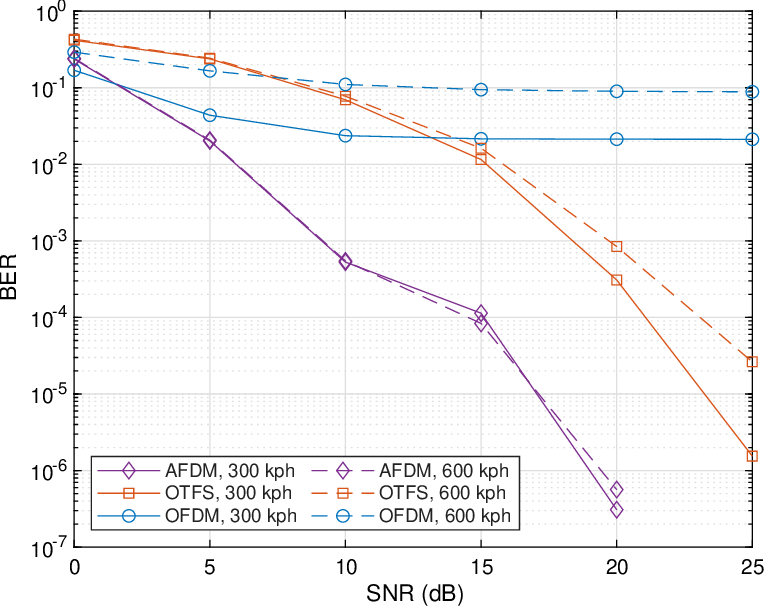}
	\caption{The BER performance versus SNR.}
	\label{AFDMvsOTFS_BER}
	\end{minipage}%
	\vspace{-0.5cm}
	\hfill
\end{figure}

\begin{figure}
	\centering
	\begin{minipage}[ht]{1.0\linewidth}
	\centering
	\includegraphics[width=1.0\linewidth]{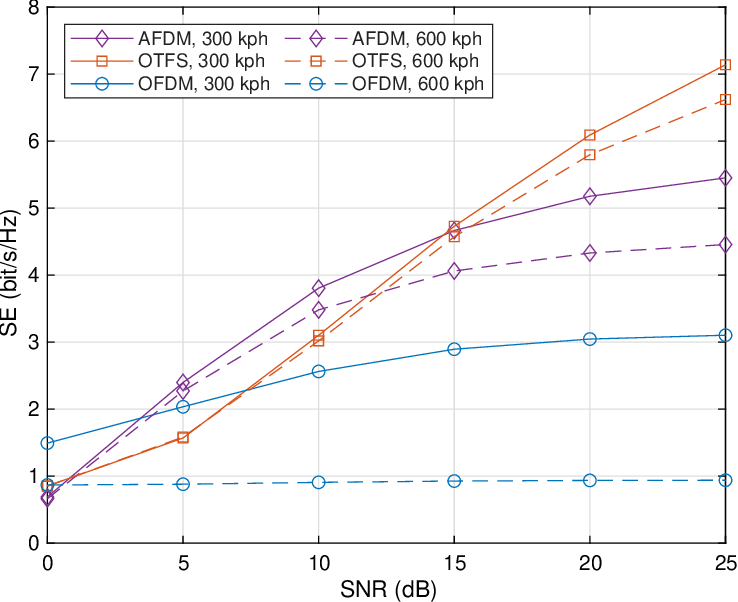}
	\caption{The SE performance versus SNR.}
	\label{AFDMvsOTFS_SE}
	\end{minipage}%
	\vspace{-0.5cm}
	\hfill
\end{figure}

Fig.~\ref{AFDMvsOTFS_BER} and Fig.~\ref{AFDMvsOTFS_SE} demonstrate the BER and SE performance, respectively, versus the SNR for AFDM, OTFS, and OFDM. As observed from these figures, AFDM achieves a lower BER than OTFS and OFDM over the considered SNR range, whereas OTFS attains a higher SE. This performance difference can be explained by the distinct modulation structures of AFDM and OTFS. Specifically, the symbol-wise temporal separation and DAFT-domain delay-Doppler separability of AFDM help localize interpath interference and channel-estimation errors, thereby improving the reliability of symbol detection. In contrast, OTFS spreads each delay-Doppler-domain symbol over the entire time-frequency frame through two-dimensional modulation, which enables more effective exploitation of time-frequency diversity and consequently yields a higher average achievable rate. However, this global coupling also causes residual fractional delay-Doppler interference and channel-estimation errors to propagate across multiple delay-Doppler symbols, resulting in a relatively higher uncoded BER under linear detection.
}

\section{Conclusion}\label{Conclusion}
In this paper, we investigated channel estimation for MIMO-AFDM systems and analyzed the input-output relationship for MIMO-AFDM systems while accounting for fractional delay and Doppler.
To tackle the estimation of fractional delay and Doppler, we proposed a T-AF domain embedded pilot structure.
By leveraging the magnitude of AF domain and the rotational invariance property of time domain, we proposed an efficient Vandermonde structure TT-based channel estimation algorithm. 
Compared to the existing parameter-estimation algorithms, the proposed algorithm significantly reduce the computational complexity.
To evaluate the MSE performance, we derived Ziv-Zakai bound for the 3D MIMO-AFDM signal model.
Simulation results showed that the proposed algorithm significantly outperforms the benchmarks at medium-to-high SNR.

\begin{appendices}
	
\section{Proof of Theorem 1}\label{proofoftheorem1}
To physically unveil the signal interaction in doubly dispersive channels, we adopt a derivation approach for the ambiguity function similar to \cite{raviteja} and \cite{ambiguity}. Although the derivation started in an integral form, the results take into account the process of discrete sampling and the application of the DAFT. As shown in the formula in the text, the derivation results of this paper are in complete agreement with the signal model in \cite{AFDM} under integer delay scenarios.

Based on (\ref{st})-(\ref{frdelaydoppler}), we obtain the following STAF-domain signal $Y\left[ n_{\mathrm{BS}},n,m \right] $ as given in (\ref{AF}), which is shown at the bottom of this page.
Then, the STAF domain signal $Y\left[ n_{\mathrm{BS}},n,m \right] $ can be obtained by discretely sampling the ambiguity function as (\ref{YSTAF}), which is shown at the bottom of the next page.
\begin{figure*}[hb]
	\hrulefill
	\begin{equation}\label{AF}
		\begin{aligned}
		&A_{g_{\mathrm{rx}},r_{n_{\mathrm{BS}}}}\left( t,f \right) \left| _{t=nT_{\mathrm{sym}},f=m\Delta f} \right. 
\\
&\quad=\int_{nT_{\mathrm{sym}}+T_{\mathrm{CPP}}}^{nT_{\mathrm{sym}}+T_{\mathrm{sym}}}{}e^{-j\left( \frac{d}{2b}\left( t^{\prime}-t-T_{\mathrm{CPP}} \right) ^2+2\pi \left( m\Delta f \right) \left( t^{\prime}-t-T_{\mathrm{CPP}} \right) +\frac{a}{2b}\left( m\Delta f \right) ^2 \right)}g_{\mathrm{rx}}^{\ast}\left( t^{\prime}-t \right) r_{n_{\mathrm{BS}}}\left( t^{\prime} \right) \mathrm{d}t^{\prime}
\\
&\quad=\sum_{n^{\prime}=0}^{N-1}{}\sum_{m^{\prime}=0}^{M-1}{}\int_{nT_{\mathrm{sym}}+T_{\mathrm{CPP}}}^{nT_{\mathrm{sym}}+T_{\mathrm{sym}}}{}e^{-j\left( \frac{d}{2b}\left( t^{\prime}-t-T_{\mathrm{CPP}} \right) ^2+2\pi \left( m\Delta f \right) \left( t^{\prime}-t-T_{\mathrm{CPP}} \right) +\frac{a}{2b}\left( m\Delta f \right) ^2 \right)}g_{\mathrm{rx}}^{\ast}\left( t^{\prime}-t \right) 
\\
&\quad\quad\times \left[ \iiint{}\left[ \mathbf{a}_{N_{\mathrm{BS}}}\left( \theta \right) \right] _{n_{\mathrm{BS}}}h\left( \tau ,\nu ,\theta \right) e^{j2\pi \nu \left( t^{\prime}-\tau \right)}x_{n^{\prime},m^{\prime}}\phi _{m^{\prime}}\left( t^{\prime}-n^{\prime}T_{\mathrm{sym}}-\tau \right) g_{\mathrm{tx}}\left( t^{\prime}-n^{\prime}T_{\mathrm{sym}}-\tau \right) \mathrm{d}\tau \mathrm{d}\nu \mathrm{d}\theta \right] \mathrm{d}t^{\prime}
\\
&\quad=\sum_{m^{\prime}=0}^{M-1}{}x_{n,m^{\prime}}\iiint{}\left[ \mathbf{a}_{N_{\mathrm{BS}}}\left( \theta \right) \right] _{n_{\mathrm{BS}}}h\left( \tau ,\nu ,\theta \right) 
\\
&\quad\quad\times \int_{nT_{\mathrm{sym}}+T_{\mathrm{CPP}}}^{nT_{\mathrm{sym}}+T_{\mathrm{sym}}}{}e^{j2\pi \nu \left( t^{\prime}-\tau \right)}e^{-j\left( \frac{d}{2b}\left( t^{\prime}-t-T_{\mathrm{CPP}} \right) ^2+2\pi \left( m\Delta f \right) \left( t^{\prime}-t-T_{\mathrm{CPP}} \right) +\frac{a}{2b}\left( m\Delta f \right) ^2 \right)}\phi _{m^{\prime}}\left( t^{\prime}-nT_{\mathrm{sym}}-\tau \right) \mathrm{d}t^{\prime}
\\
&\quad=\sum_{m^{\prime}=0}^{M-1}{}x_{n,m^{\prime}}\sum_{i=1}^{P}{}\alpha _i\left[ \mathbf{a}_{N_{\mathrm{BS}}}\left( \theta \right) \right] _{n_{\mathrm{BS}}}
\\
&\quad\quad\times \int_{nT_{\mathrm{sym}}+T_{\mathrm{CPP}}}^{nT_{\mathrm{sym}}+T_{\mathrm{sym}}}{}e^{j2\pi \nu _i\left( t^{\prime}-\tau _i \right)}e^{-j\left( \frac{d}{2b}\left( t^{\prime}-t-T_{\mathrm{CPP}} \right) ^2+2\pi \left( m\Delta f \right) \left( t^{\prime}-t-T_{\mathrm{CPP}} \right) +\frac{a}{2b}\left( m\Delta f \right) ^2 \right)}\phi _{m^{\prime}}\left( t^{\prime}-nT_{\mathrm{sym}}-\tau _i \right) \mathrm{d}t^{\prime}
		\end{aligned}
	\end{equation}

	\hrulefill
	\begin{equation}\label{YSTAF}
		\begin{aligned}
			Y\left[ n_{\mathrm{BS}},n,m \right] &=\sum_{m^{\prime}=0}^{M-1}{}x_{n,m^{\prime}}\sum_{i=1}^{P}{}\tilde{\alpha}_i\left[ \mathbf{a}_{N_{\mathrm{BS}}}\left( \theta \right) \right] _{n_{\mathrm{BS}}}\frac{1}{M}\sum_{p=M_{\mathrm{CPP}}}^{M+M_{\mathrm{CPP}}-1}{}e^{j2\pi \nu _i\frac{T}{M}p}e^{-j2\pi c_1\left( p-M_{\mathrm{CPP}} \right) ^2}e^{-j2\pi \frac{m\left( p-M_{\mathrm{CPP}} \right)}{M}}e^{-j2\pi c_2m^2}
\\
&\quad\times e^{j2\pi c_1\left( p-l_i-\iota _i-M_{\mathrm{CPP}} \right) ^2}e^{j2\pi \frac{m^{\prime}\left( p-l_i-\iota _i-M_{\mathrm{CPP}} \right)}{M}}e^{j2\pi c_2{m^{\prime}}^2}
\\
&=\sum_{m^{\prime}=0}^{M-1}{}x_{n,m^{\prime}}\sum_{i=1}^{P}{}\tilde{\alpha}_i\left[ \mathbf{a}_{N_{\mathrm{BS}}}\left( \theta \right) \right] _{n_{\mathrm{BS}}}e^{j2\pi \nu _inT_{\mathrm{sym}}}H_i\left[ m,m^{\prime} \right]
		\end{aligned}
	\end{equation}
\end{figure*}

\section{Proof of Theorem 2}\label{proofoftheorem2}
Using equation $\frac{\partial \mathbf{X}\left( s \right) ^{-1}}{\partial s}=-\mathbf{X}\left( s \right) ^{-1}\frac{\partial \mathbf{X}\left( s \right)}{\partial s}\mathbf{X}\left( s \right) ^{-1}$, the second-order partial derivative of $\mu \left( s;\mathbf{{\bm \delta} } \right) $ w.r.t. $s$ is given by
	\begin{align}
		\frac{\partial ^2\mu \left( s;\mathbf{{\bm \delta} } \right)}{\partial s^2}&=\mathrm{Tr}\left( \left( \left[ s\mathbf{R}_{\mathbf{y}|\mathbf{{\bm \varphi} }+\mathbf{{\bm \delta} }}^{-1}+\left( 1-s \right) \mathbf{R}_{\mathbf{y}|\mathbf{{\bm \varphi} }}^{-1} \right] ^{-1}\right.\right.\nonumber
		\\
		&\quad\left.\left.\times\left( \mathbf{R}_{\mathbf{y}|\mathbf{{\bm \varphi} }+\mathbf{{\bm \delta} }}^{-1}-\mathbf{R}_{\mathbf{y}|\mathbf{{\bm \varphi} }}^{-1} \right) \right) ^2 \right).
	\end{align} 

We first examine the values of $\mu \left( s;\mathbf{{\bm \delta} } \right) $ and $\frac{\partial ^2\mu \left( s;\mathbf{{\bm \delta} } \right)}{\partial s^2}$ in the asymptotic regime, i.e., in the vicinity of $\mathbf{{\bm \delta} }=\mathbf{0}_{3P}$. Performing the first-order Taylor expansion of $\mathbf{R}_{\mathbf{y}|\mathbf{{\bm \varphi} }+\mathbf{{\bm \delta} }}$ as
\begin{equation}
	\mathbf{R}_{\mathbf{y}|\mathbf{{\bm \varphi} }+\mathbf{{\bm \delta} }}\approx \mathbf{R}_{\mathbf{y}|\mathbf{{\bm \varphi} }}+\underset{\Delta \mathbf{R}}{\underbrace{\sum_i{}{\bm \delta} _i\frac{\partial \mathbf{R}_{\mathbf{y}|\mathbf{{\bm \varphi} }}}{\partial p_i}\;\mid_{\mathbf{{\bm \varphi} }=\bm{p}}^{}}}.
\end{equation}
Then, we can expand the function $\mu \left( s;\mathbf{{\bm \delta} } \right)$ by applying the approximation $\left( \mathbf{I}+\mathbf{X} \right) ^{-1}\approx \mathbf{I}-\mathbf{X}$ and $\ln \left| \mathbf{I}+\mathbf{X} \right|\approx \mathrm{Tr}\left( \mathbf{X} \right) -\frac{1}{2}\mathrm{Tr}\left( \mathbf{X}^2 \right) $ as
\begin{align}
		\mu \left( s;\mathbf{{\bm \delta} } \right) &\approx -\ln \left| s\left( \mathbf{I}+\mathbf{R}_{\mathbf{y}|\mathbf{{\bm \varphi} }}^{-1}\Delta \mathbf{R} \right) ^{-1}+\left( 1-s \right) \mathbf{I} \right|\nonumber
\\
&\quad-s\ln \left| \mathbf{I}+\mathbf{R}_{\mathbf{y}|\mathbf{{\bm \varphi} }}^{-1}\Delta \mathbf{R} \right|\nonumber
\\
&\approx s\mathrm{Tr}\left( \mathbf{R}_{\mathbf{y}|\mathbf{{\bm \varphi} }}^{-1}\Delta \mathbf{R} \right) +\left( \frac{s^2}{2}-s \right) \mathrm{Tr}\left( \left( \mathbf{R}_{\mathbf{y}|\mathbf{{\bm \varphi} }}^{-1}\Delta \mathbf{R} \right) ^2 \right) \nonumber
\\
&\quad-s\mathrm{Tr}\left( \mathbf{R}_{\mathbf{y}|\mathbf{{\bm \varphi} }}^{-1}\Delta \mathbf{R} \right) +\frac{s}{2}\mathrm{Tr}\left( \left( \mathbf{R}_{\mathbf{y}|\mathbf{{\bm \varphi} }}^{-1}\Delta \mathbf{R} \right) ^2 \right) \nonumber
\\
&=\frac{s^2-s}{2}\mathrm{Tr}\left( \left( \mathbf{R}_{\mathbf{y}|\mathbf{{\bm \varphi} }}^{-1}\Delta \mathbf{R} \right) ^2 \right) =\frac{s^2-s}{2}\mathbf{{\bm \delta} }^{\mathrm{T}}\mathbf{J{\bm \delta} }.\label{ua}
\end{align}
Hence, we have $\left.\mu \left( s;\mathbf{{\bm \delta} } \right)\right|_{s=\frac{1}{2}}^{} =-\frac{1}{8}\mathbf{{\bm \delta} }^{\mathrm{T}}\mathbf{J{\bm \delta} }$.
Similarly, $\frac{\partial ^2\mu \left( s;\mathbf{{\bm \delta} } \right)}{\partial s^2}$ can be rewritten as
\begin{align}
		\frac{\partial ^2\mu \left( s;\mathbf{{\bm \delta} } \right)}{\partial s^2}&=\mathrm{Tr}\left( \left( \left[ s\left( \mathbf{R}_{\mathbf{y}|\mathbf{{\bm \varphi} }}^{}+\Delta \mathbf{R} \right) ^{-1}+\left( 1-s \right) \mathbf{R}_{\mathbf{y}|\mathbf{{\bm \varphi} }}^{-1} \right] ^{-1}\right.\right.\nonumber
		\\
		&\quad\times\left.\left.\left( \left( \mathbf{R}_{\mathbf{y}|\mathbf{{\bm \varphi} }}^{}+\Delta \mathbf{R} \right) ^{-1}-\mathbf{R}_{\mathbf{y}|\mathbf{{\bm \varphi} }}^{-1} \right) \right) ^2 \right) \nonumber
\\
&\approx \mathrm{Tr}\left( \left( \left[ \mathbf{R}_{\mathbf{y}|\mathbf{{\bm \varphi} }}^{}+\Delta \mathbf{R} \right] \mathbf{R}_{\mathbf{y}|\mathbf{{\bm \varphi} }}^{-1}\Delta \mathbf{RR}_{\mathbf{y}|\mathbf{{\bm \varphi} }}^{-1} \right) ^2 \right) \nonumber
\\
&\approx \mathrm{Tr}\left( \left( \Delta \mathbf{RR}_{\mathbf{y}|\mathbf{{\bm \varphi} }}^{-1} \right) ^2 \right) \nonumber
\\
&=\mathbf{{\bm \delta} }^{\mathrm{T}}\mathbf{J{\bm \delta} }.
\end{align}
Next, we analyze the behavior of the function $\mu \left( s;\mathbf{{\bm \delta} } \right) $ and $\frac{\partial ^2\mu \left( s;\mathbf{{\bm \delta} } \right)}{\partial s^2}$ in the non-asymptotic regime. When $\mathbf{{\bm \delta} }$ grows large, the array steering vector $\mathbf{a}_{N_{\mathrm{BS}}}\left( \theta _i \right) $ and $\mathbf{a}_{N_{\mathrm{BS}}}\left( \theta _i+{\bm \delta} _{\theta _i} \right) $ exhibit asymptotic orthogonality \cite{zzb2d}. A similar trend is also observed for $\mathbf{b}_N\left( \nu _i \right) $ and $\mathbf{b}_N\left( \nu _i+{\bm \delta} _{\nu _i} \right) $. Furthermore, as $\mathbf{{\bm \delta} }_{\eta _i}$ increases, the main-lobe positions of $\mathbf{c}_{m_{\mathrm{pilot}}}\left( \nu _i,\eta _i \right) $ and $\mathbf{c}_{m_{\mathrm{pilot}}}\left( \nu _i,\eta _i+\mathbf{{\bm \delta} }_{\eta _i} \right) $ become misaligned, and the correlation $\left| \mathbf{c}_{m_{\mathrm{pilot}}}\left( \nu _i,\eta _i \right) ^{\mathrm{H}}\mathbf{c}_{m_{\mathrm{pilot}}}\left( \nu _i,\eta _i+\mathbf{{\bm \delta} }_{\eta _i} \right) \right|$ gradually approaches zero. Therefore, it can be inferred that, as $\mathbf{{\bm \delta} }$ increases asymptotically, the corresponding subspaces become asymptotically orthogonal. The two covariance matrices contribute equally to the eigenvalues in the signal subspace, both being $\frac{1}{\left( \sigma _{n}^{2}+N_{\mathrm{BS}}N\sigma _{i,\mathrm{PL}}^{2} \right)}$, while the eigenvalue contributions in the remaining noise subspace are $\frac{1}{\sigma _{n}^{2}}$. Thus, we have
\begin{equation}
	\left.\mu \left( s;\mathbf{{\bm \delta} } \right) \right|_{s=\frac{1}{2}}^{}\approx \sum_{i=0}^{P-1}{}\ln \left[ \frac{4\left( 1+N_{\mathrm{BS}}N\rho _i \right)}{\left( 2+N_{\mathrm{BS}}N\rho _i \right) ^2} \right],
\end{equation}
and 
\begin{equation}\label{pu2na}
	\left.\frac{\partial ^2\mu \left( s;\mathbf{{\bm \delta} } \right)}{\partial s^2}\right|_{s=\frac{1}{2}}^{}=2\sum_{i=0}^{P-1}{}\left( \frac{2N_{\mathrm{BS}}N\rho _i}{2+N_{\mathrm{BS}}N\rho _i} \right) ^2.
\end{equation}
Substituting (\ref{ua})-(\ref{pu2na}) into (\ref{Pdelta}), we obtain (\ref{Pdeltaexpression}), which completes the proof.

\end{appendices}

\bibliographystyle{IEEEtran}
% %argument is your BibTeX string definitions and bibliography database(s)
\bibliography{myre}

% that's all folks

\end{document}